\documentclass[10pt,english,journal,twocolumn]{IEEEtran}
\usepackage{blindtext}

\usepackage{babel}
\usepackage[babel]{microtype}
\usepackage[T1]{fontenc}
\usepackage[keeplastbox]{flushend}

\usepackage{refcount}

\usepackage{graphicx}
\usepackage{xcolor}
\usepackage{tikz}
\usepackage{pgfplots}

\usepackage{tabularx}
\usepackage{booktabs}
\usepackage{stfloats}

\usepackage{amsmath}
\usepackage{amsfonts}
\usepackage{amssymb}
\usepackage{amsthm}
\usepackage{bm}
\usepackage{siunitx}

\usepackage{csquotes}
\usepackage[backend=biber,url=false,style=ieee,isbn=false,doi=false]{biblatex}%
\bibliography{literature.bib}

\usepackage{hyperref}
\hypersetup{
	pdfauthor={Karl-Ludwig Besser},
	pdftitle={Bounds on the Secrecy Outage Probability for Dependent Fading Channels},
	colorlinks=true,
	allcolors=.,
	urlcolor=blue,
}

\usepackage[acronyms,nomain,xindy]{glossaries}
\makeglossaries
\glsenableentrycount
\loadglsentries{acronyms.tex}
\setacronymstyle{long-short}

\pgfplotsset{compat=newest}
\pgfplotsset{plot coordinates/math parser=false}

\usetikzlibrary{patterns, arrows, circuits, trees, positioning, fit, external}
\usepgfplotslibrary{patchplots}
\tikzset{%
	block/.style = {draw, thick, rectangle, minimum height=2em, minimum width=3em},
	sum/.style = {draw, thick, circle, inner sep=.1em, minimum height=2em}, %
}

\newcommand{\diff}{\ensuremath{\,\mathrm{d}}}
\newcommand{\expect}[2][]{\ensuremath{\mathbb{E}_{#1}\left[#2\right]}}

\newcommand{\abs}[1]{\ensuremath{\left|#1\right|}}

\newcommand{\positive}[1]{\ensuremath{\left[#1\right]^{+}}}
\newcommand{\leqone}[1]{\ensuremath{\left[#1\right]^{\leq 1}}}

\newcommand{\ratesecret}{\ensuremath{R_{S}}}
\newcommand{\rateconfusion}{\ensuremath{R_{d}}}
\newcommand{\snrbob}{\ensuremath{\rho_{x}}}
\newcommand{\snreve}{\ensuremath{\rho_{y}}}
\newcommand{\X}{\ensuremath{\bm{X}}}
\newcommand{\Xt}{\ensuremath{\bm{\tilde{X}}}}
\newcommand{\Y}{\ensuremath{\bm{Y}}}
\newcommand{\Yt}{\ensuremath{\bm{\tilde{Y}}}}
\newcommand{\xt}{\ensuremath{\tilde{x}}}
\newcommand{\yt}{\ensuremath{\tilde{y}}}
\newcommand{\Z}{\ensuremath{\bm{Z}}}
\newcommand{\lx}{\ensuremath{{\lambda_{x}}}}
\newcommand{\ly}{\ensuremath{{\lambda_{y}}}}
\newcommand{\lxt}{\ensuremath{\tilde{\lambda}_{x}}}
\newcommand{\lyt}{\ensuremath{\tilde{\lambda}_{y}}}
\newcommand{\eventsum}{\ensuremath{E_{1}}}
\newcommand{\eventbob}{\ensuremath{E_{2}}}
\newcommand{\eventeve}{\ensuremath{E_{3}}}

\theoremstyle{plain}%
\newtheorem{thm}{Theorem}%

\newtheorem{cor}{Corollary}

\theoremstyle{remark}
\newtheorem{rem}{Remark}

\definecolor{plot0}{HTML}{CF171A}
\definecolor{plot2}{HTML}{9D4390}
\definecolor{plot3}{HTML}{EB811B}
\definecolor{plot4}{HTML}{14B03D}
\definecolor{plot1}{HTML}{2db7d2}

\definecolor{coolblack}{rgb}{0.0, 0.18, 0.39}

\definecolor{change}{HTML}{000000}%

\title{Bounds on the Secrecy Outage Probability for Dependent Fading Channels}
\author{Karl-Ludwig Besser, \IEEEmembership{Student Member, IEEE} and Eduard A. Jorswieck, \IEEEmembership{Fellow, IEEE}
\thanks{The authors are with the Institute of Communications Technology, Technische Universit\"at Braunschweig, 38106 Braunschweig, Germany (email: \{{k.besser}, {e.jorswieck}\}@tu-bs.de).}
\thanks{This work is supported in part by the German Research Foundation (DFG) under grant JO\,801/23-1.}}

\begin{document}
\maketitle

\begin{abstract}
	The amount of sensitive data, which is transmitted wirelessly will increase with future technologies. This raises many questions about secure data transmission. Besides cryptography, information-theoretic security gained increasing attention over the recent years. Among others, it deals with the problem of secure data transmission on the physical layer to a legitimate receiver (Bob) in the presence of an eavesdropper (Eve).
	In this work, we investigate upper and lower bounds on the secrecy outage probability for slowly-fading wiretap channels with an arbitrary dependency structure between the fading channels to Bob and Eve. Both cases of absence of {channel-state information at the transmitter (CSI-T)} and availability of CSI-T of only the main channel to the legitimate receiver are considered. Furthermore, we derive explicit expressions for the upper and lower bounds for Rayleigh fading and compare them to the case of independent channels. The joint distribution of the legitimate and eavesdropper channels has a tremendous impact on the achievable secrecy outage probability.
	The bounds enable developing guaranteed secrecy schemes by only measuring the marginal channel distributions.
\end{abstract}

\begin{IEEEkeywords}
Physical layer security, Fading wiretap channels, Network reliability, Joint distributions, Secrecy outage probability.
\end{IEEEkeywords}

\section{Introduction}\label{sec:introduction}

Wireless data transmission plays already today an important role in personal communications. With new emerging concepts and technologies like \gls{iot} and 6G~\cite{Saad2019}, the amount of wirelessly transmitted data will further increase. This also includes private and sensitive data {which} should therefore be transmitted securely.

Cryptography is one option to achieve this goal. However, it requires a shared-key between the different parties which does not scale well with the massive increase of devices.
Another approach is physical layer security~\cite{Bloch2011} where the physical properties of the wireless channel are exploited to enable a secure transmission. One fundamental result shows that certain channel codes exist for the standard \gls{awgn} channel, which allow zero information leakage to a passive eavesdropper~\cite{Wyner1975}.
In a setting where the channels to the legitimate receiver and the eavesdropper also experience fading, outages can occur due to the random nature of the fading~\cite{Gungor2013}. 

A designer of a communication system in such a scenario might face the following task: \emph{My application can tolerate a maximum secrecy outage probability $\varepsilon$. What is the maximum transmission rate such that the secrecy outage probability of my system is below $\varepsilon$, even in the worst-case?}
Especially for critical tasks, it is important to guarantee {a maximum outage probability} even in the worst-case. The uncertainty about the system can stem from many different factors. The one we consider in this work is the statistical dependence of the communication channels. The marginal distributions of the fading coefficients for different positions (or users) can easily be measured. However, the dependency between them is usually unclear because the attacker might choose position and orientation according to the legitimate transmitter and receiver.

On the other hand, relaying technologies as well as new emerging technologies like reconfigurable intelligent surfaces~\cite{DiRenzo2019} might enable a pro-active design of the radio environment and therefore a control of the dependency structure. 
Another way of enhancing the secrecy outage probability is the use of cooperative jamming~\cite{Hu2019}.
A wireless engineer might therefore ask himself in the future: \emph{How close is the operating point in terms of secrecy outage rate of my system to the best-possible and how can I achieve the upper bound?}
In this case, the lower bound on the secrecy outage probability over all joint distributions can be used as a benchmark.

In the literature, usually independent legitimate and eavesdropper channels are considered~\cite{Bloch2008b,Gopala2008,Bhargav2016,Wang2018,Zhao2019}. However, real measurements demonstrate that the commonly used assumption of independent channels is not always justified~\cite{Lee1973,Sang-BinRhee1974}. In particular, if the eavesdropper chooses its position and orientation carefully. 
Therefore, some work has already studied the influence of correlation between channels on the secrecy performance. In \cite{HyoungsukJeon2011}, bounds on the secrecy capacity for correlated fading channels are provided. However, the results hold only for the high \gls{snr} regime and only for the specific case of Rayleigh fading. Additionally, the not well justified assumption of full \gls{csit} about both channels is assumed. The results of \cite{HyoungsukJeon2011} are extended in \cite{Sun2012} for general \gls{snr}. A different fading distribution, namely log-normal fading, is considered in \cite{Liu2013}. Additional differences are that \cite{Liu2013} focuses on the outage probability instead of the secrecy capacity and no \gls{csit} about the eavesdropper's channel is assumed.
The secrecy outage probability of correlated channels with small-scale fading and shadowing has been investigated in~\cite{Alexandropoulos2018}.

In all of this previous work only (positive) linear correlation is considered. In contrast, the upper and lower bounds that we derive in this work are based on copulas~\cite{Nelsen2006} and hold for any dependency between the channels.
{For peaceful systems, this has already been considered in \cite{Besser2020}.} In the area of physical layer security, a similar approach has been used to derive general bounds on the ergodic secret-key capacity~\cite{Besser2020wsa}.

In this work, we answer the aforementioned question about the secrecy outage probability of the worst-case\footnote{\label{footnote:best-worst-case}{The worst- and best-case refer to the upper and lower bound on the secrecy outage probability, respectively, over all joint distributions with given marginals.}} with respect to the joint distribution of the fading coefficients of the channels to the legitimate receiver and a passive eavesdropper. In addition, we will derive the outage probability for the best-case\footnotemark[\getrefnumber{footnote:best-worst-case}]. All possible joint distributions cause an outage probability somewhere between the two bounds. The bounds are achievable, i.e., there exist joint distributions which indeed achieve them with equality. 

We always consider only statistical \gls{csit} of the eavesdropper channel. However, we consider the two cases of perfect \gls{csit} about the channel to {the} legitimate receiver and only statistical \gls{csit}.
The results are explicitly evaluated for the case of Rayleigh fading and compared to the case of independent channels.
The main contributions are summarized in the following.
\begin{itemize}
	\item We derive tight upper and lower bounds on the secrecy outage probability for dependent fading channels when the transmitter Alice has perfect \gls{csit} about the channel to the legitimate receiver Bob.
	\item We derive tight upper and lower bounds on the secrecy outage probability for dependent fading channels when the transmitter Alice has only statistical \gls{csit}.
	\item We give a general sufficient condition on the distributions of the fading coefficients for which the secrecy outage probability is only determined by the quality of Bob's channel.
	\item We evaluate all bounds for the special scenario of Rayleigh fading and compare them to the case of independent channels. All numerical evaluations and plots are made publicly available in an interactive notebook at \cite{BesserGitlab}.
\end{itemize}

The rest of the paper is organized as follows. In Section~\ref{sec:prelimiaries}, the system model, preliminaries on physical layer security, and the required mathematical background are introduced.
The bounds on the secrecy outage probability for the cases of perfect \gls{csit} and only statistical \gls{csit} about the main channel are derived in Sections~\ref{sec:outage-csit-main} and \ref{sec:outage-no-csit}, respectively.
In Section~\ref{sec:full-outage}, we consider an alternative definition for the secrecy outage probability and derive the bounds for this notion.
All of these results are then evaluated explicitly and compared to the case of independent channels for Rayleigh fading in Section~\ref{sec:rayleigh}. Finally, Section~\ref{sec:conclusion} concludes the paper.

\textit{Notation:}
Throughout this work, we use following notation. Random variables are denoted in capital boldface letters, e.g., $\X$, and their realizations in small letters, e.g., $x$. We will use $F$ and $f$ for a probability distribution and its density, respectively. The expectation is denoted by $\mathbb{E}$ and the probability of an event by $\Pr$. It is assumed that all considered distributions are continuous. The dual of a copula $C$ is written as $\bar{C}$.
As a shorthand, we use $\positive{x}=\max\left[x, 0\right]$ and similarly $\leqone{x}=\min\left[x, 1\right]$.
The derivative of a univariate function $g$ is written as $g^{\prime}$.
The real numbers are denoted by $\mathbb{R}$. Logarithms, if not stated otherwise, are assumed to be with respect to the natural base.
\section{System Model, Problem Statement and Preliminaries}\label{sec:prelimiaries}
In this section, we will first state the system model and some important definitions and facts from the area of physical layer security~\cite{Bloch2011}. Next, we state the problem formulation. Afterwards, we will introduce needed mathematical background from copula theory~\cite{Nelsen2006}.

\subsection{Fading Wiretap Channel}
Throughout this work, we consider the discrete-time Gaussian wiretap channel with quasi-static block flat-fading as channel model~\cite{Bloch2008b}. %
The transmitter Alice wants to transmit messages $\bm{M}$ securely to the legitimate receiver Bob {by encoding them into codewords $\bm{A}$}. The communication channel between them suffers from fading $\bm{H}_{x}$ and \gls{awgn} $\bm{N}_{x}$ with average noise power $N_{B}$. We will refer to this channel as \enquote{main channel}.
The received signal at Bob at time $i$ is given by
\begin{equation*}
\bm{B}(i) = \bm{H}_{x}(i)\bm{A}(i) + \bm{N}_{x}(i)\,.
\end{equation*}

The communication between Alice and Bob is eavesdropped by Eve. The channel between this passive eavesdropper and Alice suffers from fading $\bm{H}_{y}$ and \gls{awgn} $\bm{N}_{y}$ with average noise power $N_{E}$.
The received signal at Eve at time $i$ is given by
\begin{equation*}
\bm{E}(i) = \bm{H}_{y}(i)\bm{A}(i) + \bm{N}_{y}(i)\,.
\end{equation*}

The transmission at Alice is subject to an average power constraint
\begin{equation*}
\frac{1}{N}\sum_{i=1}^{N}\expect{\abs{\bm{A}(i)}^2}\leq P\,.
\end{equation*}
The receiver \glspl{snr} are given as $\snrbob=P/N_{B}$ and $\snreve=P/N_{E}$ for Bob and Eve, respectively.

Since we consider quasi-static block flat-fading, the fading coefficients are constant for the transmission of one codeword, i.e., $\bm{H}_{x}(i) = \bm{H}_{x}$ and $\bm{H}_{y}(i) = \bm{H}_{y}$. We will therefore drop the time index $i$ in the following.

The goal of Alice is to transmit the message in such a way that Bob can decode it reliably while Eve cannot infer any information about it.
One way to achieve this is to use a wiretap code with binning structure~\cite{Bloch2011}. The idea is the following:
{Alice wants to encode secret messages of length $n\ratesecret$ into codewords of length $n$. Instead of a one-to-one mapping, each of the $2^{n\ratesecret}$ messages can be mapped to one of $2^{n\rateconfusion}$ possible codewords, which are referred to as a bin.} A stochastic encoder randomly selects one of the possible codewords from the bin corresponding to the message which is transmitted. {Therefore, the total number of codewords is $2^{n(\ratesecret+\rateconfusion)}$, which can be seen as transmitting a message of length $n(\ratesecret+\rateconfusion)$ which is constructed from a secret part of length $n\ratesecret$ and a random part of length $n\rateconfusion$.} The rates $\ratesecret$ and $\rateconfusion$ of the secret and dummy messages, respectively, are selected in such a way that Bob is able to decode them reliably. If Eve's channel is worse than Bob's, she is only able to decode a bin but not the actual secure message.
However, in the case of fading channels, the qualities of the channels to Bob and Eve vary due to the random nature of the fading coefficients.
The instantaneous secrecy capacity $C_{S}$ is then a random variable and it is given by~\cite[Lem.~1]{Bloch2008b}
\begin{equation}
C_S = \positive{\log_2\left(1 + \snrbob\X\right) - \log_2\left(1 + \snreve\Y\right)}\,,
\end{equation}
with the shorthands $\X=\abs{\bm{H}_x}^2$ and $\Y=\abs{\bm{H}_y}^2$.
If this secrecy capacity is less than the secrecy rate $\ratesecret$, which is used for the transmission, a secrecy outage occurs. Another event which causes an outage is the case that Bob cannot decode the message reliably~\cite[Rem.~5.7]{Bloch2011}. This happens in the event that his channel capacity is less than the total transmission rate $\rateconfusion+\ratesecret$. If Alice has perfect \gls{csi} about the main channel, {she can use rate adaption~\cite[Sec.~5.2.3]{Bloch2011} to avoid the event that the transmission rate falls below the capacity of the main channel.}
We will consider the cases of statistical \gls{csit} to Eve and either perfect \gls{csit} or statistical \gls{csit} to Bob in Sections~\ref{sec:outage-csit-main} and \ref{sec:outage-no-csit}, respectively.

\subsection{Problem Formulation}
With the above considerations, we can give the exact problem formulation for the rest of the work, as follows.

We consider the previously described wiretap channel with quasi-static block flat-fading. The predefined secure communication rate is $\ratesecret$. There is only statistical \gls{csit} of the channel to Eve. If the transmitter has perfect \gls{csi} about the main channel, the secrecy outage probability is defined as~\cite[Def.~5.1]{Bloch2011}
\begin{equation}\label{eq:def-secrecy-outage-main-csit}
\varepsilon_{\text{CSIT}} = \Pr\nolimits_{\X, \Y}\left(C_S(\X, \Y) < \ratesecret\right)\,.
\end{equation}
If no \gls{csit} is available, the outages due to decoding errors at Bob need to be included. With the rate $\rateconfusion$ of dummy messages, i.e., a total transmission rate $\rateconfusion+\ratesecret$, {the definition of the secrecy outage probability} can be modified to~\cite[Rem.~5.7]{Bloch2011}
\begin{equation}
\varepsilon_{\text{no}} = \Pr\nolimits_{\X, \Y}\left(C_S(\X, \Y) < \ratesecret \vee C_m(\X)<\rateconfusion+\ratesecret\right),
\end{equation}
where $C_m$ is the capacity of the main channel.

{It can be seen that the outage probabilities depend on the joint distribution of $\X$ and $\Y$. It is possible that the channels are not independent and usually, we only have information about the marginal distributions of the channel gains. Our considered problem statement is therefore as follows:}
Find the best- and worst-case outage probabilities for $\varepsilon_{\text{CSIT}}$ and $\varepsilon_{\text{no}}$ over all possible joint distributions $F_{\X,\Y}$ given the marginal distributions $F_{\X}$, $F_{\Y}$, i.e., 
\begin{equation*}
\inf_{F_{\X,\Y}: \genfrac{}{}{0pt}{}{F_{\X}(x) = F_{\X,\Y}(x,\infty)}{F_{\Y}(y) = F_{\X,\Y}(\infty,y)}} \varepsilon \leq \varepsilon \leq \sup_{F_{\X,\Y}: \genfrac{}{}{0pt}{}{F_{\X}(x) = F_{\X,\Y}(x,\infty)}{F_{\Y}(y) = F_{\X,\Y}(\infty,y)}} \varepsilon\,.
\end{equation*}
In Sections~\ref{sec:outage-csit-main} and \ref{sec:outage-no-csit}, we will derive upper and lower bounds on these outage probabilities over all possible joint distributions $F_{\X,\Y}$.

\subsection{Mathematical Background}
The bounds on the outage probability are derived from copula theory~\cite{Nelsen2006}. One major advantage is that this covers all possible dependency structures between the channels and not only linear correlation.
Our results, which we will present in Sections~\ref{sec:outage-csit-main} and \ref{sec:outage-no-csit}, are based on \cite[Thm.~1]{Williamson1990} which in turn immediately follows from \cite[Thm.~3.1]{Frank1987}. We will restate \cite[Thm.~1]{Williamson1990} in the following as Theorem~\ref{thm:bounds-sum-frank}. Since we will use the idea of its proof for our results, we also restate the proof.
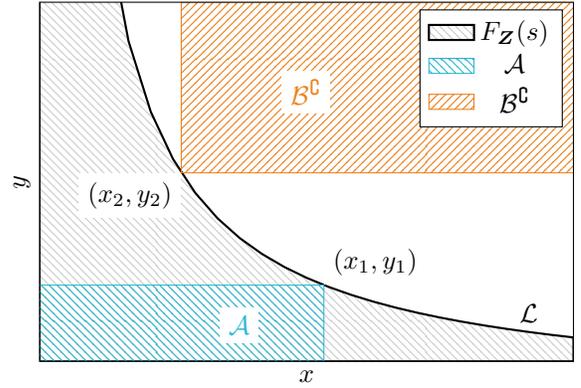
\begin{figure}
		\centering
		\begin{tikzpicture}
\begin{axis}[
	width=.98\linewidth,
	height=.26\textheight,
	xlabel={$x$},
	ylabel={$y$},
	domain=0.01:6,
	xmin=0.01,
	xmax=6,
	ymin=0.1,
	ymax=1.1,
	axis on top,
	xtick=\empty,
	ytick=\empty,
]
\addplot[black,thick,pattern=north west lines, pattern color=gray!40, area legend] {1/x}\closedcycle;
\addlegendentry{$F_{\Z}(s)$};

\addplot[plot1,pattern=north west lines, pattern color=plot1, domain=0:3.2, area legend] {1/3.2}\closedcycle;
\addlegendentry{$\mathcal{A}$};
\node[anchor=center, text=plot1, fill=white] at (axis cs: 2.25, 0.2) {$\mathcal{A}$};

\draw[plot3,pattern=north east lines, pattern color=plot3] (1.6,0.625) rectangle (6,1.1);
\addlegendimage{area legend,pattern=north east lines,pattern color=plot3,draw=plot3}
\addlegendentry{$\mathcal{B}^{\complement}$};
\node[anchor=center, text=plot3, fill=white] at (axis cs: 3, 0.85) {$\mathcal{B}^{\complement}$};

\node[anchor=south west, text=black, fill=white] at (axis cs: 3.2, 0.3125) {$(x_1, y_1)$};
\node[anchor=north east, text=black, fill=white] at (axis cs: 1.6, 0.625) {$(x_2, y_2)$};
\node[anchor=south, text=black] at (axis cs: 5.5, 0.1818) {$\mathcal{L}$};

\end{axis}
\end{tikzpicture}
		
		\vspace*{-0.5em}
		\caption{Visualization of the different regions used for the proof of Theorem~\ref{thm:bounds-sum-frank} (adapted from \cite[Fig.~1]{Williamson1990}). The line $\mathcal{L}$ is given by $L(x, y)=s$.}
		\label{fig:proof-copula-bounds-l}
\end{figure}
\begin{thm}[{\cite[Thm.~1]{Williamson1990}}]\label{thm:bounds-sum-frank}
	Let $\X$ and $\Y$ be random variables over the non-negative real numbers with \gls{cdf} $F_{\X}$ and $F_{\Y}$, respectively. Let $L$ be a binary operation that is non-decreasing in each place and continuous. The \gls{cdf} of the random variable $\Z=L(\X, \Y)$ is bounded by
	\begin{equation}
	\tau_{W}(F_{\X}, F_{\Y}) \leq F_{\Z} \leq \phi_{W}(F_{\X}, F_{\Y})\,,
	\end{equation}
	with
	\begin{align}
	\tau_{C}(F_{\X}, F_{\Y})(s) &= \sup_{L(x, y)=s} C(F_{\X}(x), F_{\Y}(y))\\
	\phi_{C}(F_{\X}, F_{\Y})(s) &= \inf_{L(x, y)=s} \bar{C}(F_{\X}(x), F_{\Y}(y))
	\end{align}
	for a copula $C$ and its dual $\bar{C}(a, b)=a+b-C(a, b)$.
\end{thm}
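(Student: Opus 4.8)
\emph{Proof strategy.}
My plan is to reproduce, in probabilistic terms, the geometric picture of Fig.~\ref{fig:proof-copula-bounds-l}. Fix $s$ and set $G_s=\{(x,y)\in[0,\infty)^2: L(x,y)\le s\}$, so that $F_{\Z}(s)=\Pr\nolimits_{\X,\Y}((\X,\Y)\in G_s)$. Because $L$ is continuous and non-decreasing in each place, $G_s$ is a down-set whose upper-right boundary is the level set $\mathcal{L}=\{(x,y): L(x,y)=s\}$ (when $\mathcal{L}=\emptyset$ the stated bounds are vacuous, so assume it is non-empty). The whole argument rests on two elementary inclusions coming only from monotonicity of $L$: for every $(x_1,y_1)\in\mathcal{L}$ the south-west rectangle $[0,x_1]\times[0,y_1]$ is contained in $G_s$, and for every $(x_2,y_2)\in\mathcal{L}$ the north-east rectangle $[x_2,\infty)\times[y_2,\infty)$ is contained in $\{L\ge s\}$. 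These are exactly the regions $\mathcal{A}$ and $\mathcal{B}^{\complement}$ of the figure.

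For the lower bound I would introduce the copula $C$ of $(\X,\Y)$, so that $\Pr(\X\le x,\Y\le y)=C(F_{\X}(x),F_{\Y}(y))$ by Sklar's theorem. The first inclusion gives $F_{\Z}(s)\ge C(F_{\X}(x_1),F_{\Y}(y_1))$ for every $(x_1,y_1)\in\mathcal{L}$, and the Fr\'echet--Hoeffding lower bound $C(u,v)\ge\max(u+v-1,0)=:W(u,v)$ turns this into $F_{\Z}(s)\ge W(F_{\X}(x_1),F_{\Y}(y_1))$. Taking the supremum over all $(x_1,y_1)$ with $L(x_1,y_1)=s$ yields $F_{\Z}(s)\ge\tau_W(F_{\X},F_{\Y})(s)$, which is the asserted lower bound.

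For the upper bound I would use the second inclusion: since $[x_2,\infty)\times[y_2,\infty)\subseteq\{L\ge s\}$, the event $\{\Z<s\}$ is disjoint from that rectangle, so $F_{\Z}(s)\le 1-\Pr(\X\ge x_2,\Y\ge y_2)$. Expanding the joint survival probability by inclusion--exclusion and using continuity of the marginals, $\Pr(\X\ge x_2,\Y\ge y_2)=1-F_{\X}(x_2)-F_{\Y}(y_2)+C(F_{\X}(x_2),F_{\Y}(y_2))$, whence $F_{\Z}(s)\le F_{\X}(x_2)+F_{\Y}(y_2)-C(F_{\X}(x_2),F_{\Y}(y_2))=\bar{C}(F_{\X}(x_2),F_{\Y}(y_2))$. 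From $C\ge W$ one gets $\bar{C}(u,v)=u+v-C(u,v)\le u+v-W(u,v)=\min(u+v,1)=:\bar{W}(u,v)$, so $F_{\Z}(s)\le\bar{W}(F_{\X}(x_2),F_{\Y}(y_2))$; taking the infimum over $(x_2,y_2)\in\mathcal{L}$ gives $F_{\Z}(s)\le\phi_W(F_{\X},F_{\Y})(s)$.

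I expect the main obstacle to be not the probability but the fine print of the geometry. The two rectangle inclusions use only monotonicity, but to know that $\mathcal{L}$ is the relevant boundary — in particular that it is non-empty for the $s$ of interest and that it genuinely parametrizes the corner loci of the two rectangle families — one needs continuity and monotonicity of $L$ together with connectedness of the quadrant (an intermediate-value argument along axis-parallel rays); vertical and horizontal pieces of $\mathcal{L}$ are harmless. The second, more technical point is the passage from $\{\Z<s\}$ to $F_{\Z}(s)=\Pr(\Z\le s)$ in the upper-bound step: the clean bound needs $\Pr(\Z=s)=0$, i.e. no atom of $\Z$ at $s$, which is guaranteed by the standing assumption that the distributions are continuous together with non-degeneracy of the operations $L$ used later (or, in general, by applying the bound at $s+\delta$ and letting $\delta\to0$). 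Everything else reduces to Sklar's theorem and the Fr\'echet--Hoeffding inequalities applied uniformly over $\mathcal{L}$.
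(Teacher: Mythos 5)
Your proposal is correct and follows essentially the same route as the paper's own proof: the southwest-rectangle inclusion plus Sklar's theorem and the Fr\'echet--Hoeffding lower bound for $\tau_W$, and the complement of the northeast rectangle plus inclusion--exclusion and the dual copula for $\phi_W$, exactly mirroring the regions $\mathcal{A}$ and $\mathcal{B}$ of Fig.~\ref{fig:proof-copula-bounds-l}. Your added remarks on the non-emptiness of $\mathcal{L}$ and on possible atoms of $\Z$ at $s$ are reasonable technical footnotes that the paper passes over silently, but they do not change the argument.
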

\begin{proof}
	For the proof of the theorem, we first need the well-known Fr\'{e}chet-Hoeffding bound~\cite{Nelsen2006}
	\begin{equation*}
	W(a, b) = \max\left[a+b-1, 0\right] \leq C(a, b)\,.
	\end{equation*}
	Next, we define the line $\mathcal{L}=\left\{(x, y)\;|\;L(x, y)=s\right\}$, which is exemplary shown in Fig.~\ref{fig:proof-copula-bounds-l}.
	With reference to Fig.~\ref{fig:proof-copula-bounds-l}, we can observe the following for any pairs $(x_1, y_1)$ and $(x_2, y_2)$ on line $\mathcal{L}$,
	\begin{align}
	W(F_{\X}(x_1), F_{\Y}(y_1)) &\leq C(F_{\X}(x_1), F_{\Y}(y_1))\\
	&= \iint_{\mathcal{A}}\diff{C(F_{\X}(x), F_{\Y}(y))}\\
	&\leq F_{\Z}(s)\\
	&\leq \iint_{\mathcal{B}}\diff{C(F_{\X}(x), F_{\Y}(y))}\label{eq:proof-less-b}\\
	\begin{split}&=F_{\X}(x_2) + F_{\Y}(y_2) \\&\qquad- C(F_{\X}(x_2), F_{\Y}(y_2))\end{split}\label{eq:proof-expression-b}\\
	&= \bar{C}(F_{\X}(x_2), F_{\Y}(y_2))\\
	&\leq \bar{W}(F_{\X}(x_2), F_{\Y}(y_2))\,.
	\end{align}
	
	The first inequality is the Fr\'{e}chet-Hoeffding lower bound. The second line is the definition of the joint \gls{cdf} of $\X$ and $\Y$. The third line follows from the fact, that the probability of $\Z<s$ is the gray shaded area below $\mathcal{L}$. It can be seen that this is smaller than the area $\mathcal{A}$. On the other hand, the gray area is a subset of $\mathcal{B}$, which gives \eqref{eq:proof-less-b}. The integral in \eqref{eq:proof-less-b} can be expressed as the sum of the individual probabilities, which is given as \eqref{eq:proof-expression-b}. This is defined to be the dual of copula $C$. The last line again follows immediately from the Fr\'{e}chet-Hoeffding lower bound.
\end{proof}

\section{Bounds on the Secrecy Outage Probability with Perfect CSI-T}\label{sec:outage-csit-main}
At first, we assume that Alice has perfect \gls{csi} about the main channel to the legitimate receiver Bob.
In this case, a secrecy outage happens, if the instantaneous secrecy capacity is less than the secrecy rate $\ratesecret$ used in the transmission. This is represented by the following event
\begin{equation*}
	\eventsum:\quad \log_2(1+\snrbob\X)-\log_2(1+\snreve\Y) < \ratesecret
\end{equation*}
with random fading channel gains $\X=\abs{\bm{H}_x}^2$ and $\Y=\abs{\bm{H}_y}^2$.
An equivalent formulation, which we will use in the following, is
\begin{equation}\label{eq:def-event-sum}
\eventsum:\quad \Xt+\Yt < 2^{\ratesecret} - 1
\end{equation}
with the random variables $\Xt=\snrbob\X$ and $\Yt=-2^{\ratesecret}\snreve\Y$.
Therefore, the secrecy outage probability $\varepsilon$ is given as~\cite{Bloch2011}
\begin{equation}\label{eq:def-outage}
	\varepsilon_{\text{CSIT}} = \Pr(\eventsum) = \Pr\left(\Xt+\Yt < s\right)\,,
\end{equation}
where we introduce the shorthand $s=2^{\ratesecret}-1$.

Given the joint \gls{cdf} $F_{\Xt, \Yt}$ of $\Xt$ and $\Yt$, this probability is given as
\begin{equation}
\varepsilon = \int_{\xt+\yt<s}\diff{F_{\Xt, \Yt}(\xt, \yt)}\,.
\end{equation}

With the well-known Fr\'{e}chet-Hoeffding bounds, we can bound this probability as shown in \cite[Thm.~3.1]{Frank1987} or in Theorem \ref{thm:bounds-sum-frank} for the special case of $L(x,y)=x+y$ as
\begin{align}
\underline{\varepsilon} = \inf_{F_{\Xt, \Yt}} \varepsilon &= \sup_{\xt+\yt=s} \positive{F_{\Xt}(\xt) + F_{\Yt}(\yt) - 1}\label{eq:lower-outage-sum}\\
\overline{\varepsilon} = \sup_{F_{\Xt, \Yt}} \varepsilon &= \inf_{\xt+\yt=s} \leqone{F_{\Xt}(\xt)+F_{\Yt}(\yt)} \label{eq:upper-outage-sum}\,.
\end{align}

In the following, we will take a closer look at solutions and conditions for these bounds.

\subsection{Lower Bound}
We will state the result for the lower bound first and give the derivation in the following.
\begin{thm}[Lower Bound on the Secrecy Outage Probability with Main \gls{csit}]\label{thm:lower-bound-general-csit}
	Let $\X$ and $\Y$ be random variables over the non-negative real numbers representing the squared magnitude of the channel gains to Bob and Eve, respectively. The transmitter has perfect \gls{csi} only about the main channel to Bob and no \gls{csi} about the channel to Eve.
	The secrecy outage probability is then lower bounded by
	\begin{equation}\label{eq:lower-bound-general-csit}
	\underline{\varepsilon_{\text{CSIT}}} = \begin{cases}
	F_{\Xt}(s) & \text{\upshape if } f_{\Yt}^{\prime}(\yt^\star) + f_{\Xt}^{\prime}(s-\yt^\star) \geq 0\\
	\max_{\yt^\star \in \mathcal{Y}} g\left(\yt^{\star}\right) & \text{\upshape if } f_{\Yt}^{\prime}(\yt^\star) + f_{\Xt}^{\prime}(s-\yt^\star) < 0
	\end{cases},
	\end{equation}
	with
	\begin{equation}\label{eq:def-g-thm}
	g(\yt) = F_{\Xt}(s-\yt) + F_{\Yt}(\yt) - 1\,,
	\end{equation}
	and where the maximum is over all $\yt^\star\leq0$. The set of feasible $\yt^\star$ is the following
	\begin{equation}
		\mathcal{Y} = \left\{\yt^\star \;\middle|\; \yt^\star < 0 \wedge f_{\Yt}(\yt^{\star}) = f_{\Xt}(s-\yt^{\star})\right\} \cup \left\{0\right\}\,.
		\label{eq:cons}
	\end{equation}
	The used shorthands are $\Xt=\snrbob\X$, $\Yt=-2^{\ratesecret}\snreve\Y$, and $s=2^{\ratesecret}-1$.
\end{thm}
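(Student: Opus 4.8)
The plan is to collapse the two-dimensional optimization in \eqref{eq:lower-outage-sum} to a one-dimensional one by parametrizing the line $\xt+\yt=s$ by $\yt$, i.e.\ setting $\xt=s-\yt$ so that the objective becomes $\positive{g(\yt)}$ with $g$ as in \eqref{eq:def-g-thm}. First I would record the support facts that make the boundary analysis work: since $\Xt=\snrbob\X\geq 0$ and $\Yt=-2^{\ratesecret}\snreve\Y\leq 0$, we have $F_{\Xt}(\xt)=0$ for $\xt\leq 0$ and $F_{\Yt}(\yt)=1$ for $\yt\geq 0$. Consequently, on the part of the line with $\yt\geq 0$ the objective equals $F_{\Xt}(s-\yt)\leq F_{\Xt}(s)=g(0)$, and $g(\yt)\to 0$ as $\yt\to-\infty$ because $F_{\Xt}(s-\yt)\to 1$ while $F_{\Yt}(\yt)\to 0$. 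Hence the supremum in \eqref{eq:lower-outage-sum} reduces to $\sup_{\yt\leq 0} g(\yt)$, and since $g(0)=F_{\Xt}(s)\geq 0$ the outer $\positive{\cdot}$ is never active.

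Next I would differentiate: $g'(\yt)=f_{\Yt}(\yt)-f_{\Xt}(s-\yt)$ and $g''(\yt)=f_{\Yt}'(\yt)+f_{\Xt}'(s-\yt)$, which is exactly the quantity appearing in the case split of \eqref{eq:lower-bound-general-csit}. In the first case, when $g''\geq 0$, the function $g$ is convex on $(-\infty,0]$, so its supremum is attained at an endpoint; comparing $g(0)=F_{\Xt}(s)$ with $\lim_{\yt\to-\infty}g(\yt)=0$ gives $\underline{\varepsilon_{\text{CSIT}}}=F_{\Xt}(s)$. In the second case an interior maximizer may exist; any such maximizer $\yt^\star$ must be stationary, i.e.\ satisfy $g'(\yt^\star)=0$, equivalently $f_{\Yt}(\yt^\star)=f_{\Xt}(s-\yt^\star)$, with $\yt^\star<0$. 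Adjoining the boundary candidate $\yt^\star=0$ yields the feasible set $\mathcal{Y}$ of \eqref{eq:cons}, and $\sup_{\yt\leq 0} g(\yt)=\max_{\yt^\star\in\mathcal{Y}} g(\yt^\star)$; once more $g(0)=F_{\Xt}(s)\geq 0$ makes the $\positive{\cdot}$ redundant, which is why \eqref{eq:lower-bound-general-csit} can drop it.

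The main obstacle I anticipate is the careful bookkeeping at the edge of the feasible region: establishing $F_{\Yt}(0)=1$ so that $g(0)=F_{\Xt}(s)$, ruling out $\yt\to-\infty$ and $\yt>0$ as candidate maximizers, and arguing that the sign of $g''$ is precisely what decides whether the optimum lies in the interior or on the boundary. A secondary point to treat cleanly is that the first-order condition $f_{\Yt}(\yt^\star)=f_{\Xt}(s-\yt^\star)$ only identifies stationary points, so their values must still be compared against the endpoint value $F_{\Xt}(s)$ — which is exactly the reason the maximum in \eqref{eq:lower-bound-general-csit} is taken over $\mathcal{Y}$ rather than over the stationary points alone.
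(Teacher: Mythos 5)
Your proposal follows essentially the same route as the paper's proof in Appendix~\ref{app:proof-thm-lower-bound-general-csit}: parametrize the Fr\'{e}chet--Hoeffding bound along $\xt+\yt=s$ to get $\sup_{\yt\le 0}\positive{g(\yt)}$, evaluate the boundary values $g(0)=F_{\Xt}(s)$ and $\lim_{\yt\to-\infty}g(\yt)=0$, and classify the stationary points via $g'(\yt)=f_{\Yt}(\yt)-f_{\Xt}(s-\yt)$ and the sign of $g''$. Your additional bookkeeping on the supports of $\Xt$ and $\Yt$ and on the redundancy of the $\positive{\cdot}$ is a slightly more explicit version of what the paper leaves implicit, but the argument is the same.
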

\begin{proof}
	{The proof can be found in Appendix~\ref{app:proof-thm-lower-bound-general-csit}.}
\end{proof}

Note that the solution $\underline{\varepsilon}=F_{\Xt}(s)$ is only depending on the quality of the channel to Bob. Therefore, from an operational point of view, this is the best-possible lower bound and similar to the case when there is no eavesdropper. In order to have this solution, it is sufficient that $g(\yt)\leq F_{\Xt}(s)$ for all $\yt<0$. From this, the following corollary follows immediately.
\begin{cor}\label{cor:lower-bound-no-eve}
	Let the random variables $\Xt$ and $\Yt$ be as previously defined. The transmitter has perfect \gls{csi} only about the main channel and only statistical \gls{csi} about the eavesdropper channel.
	In this case, the secrecy outage probability is given as $\underline{\varepsilon} = F_{\Xt}(s)$ if
	\begin{equation}\label{eq:condition-best-lower-csit}
	\Pr\left(s<\Xt<s-\yt\right) \leq \Pr\left(\Yt\geq\yt\right)
	\end{equation}
	holds for all $\yt\leq0$.
\end{cor}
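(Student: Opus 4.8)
The plan is to derive this as a direct consequence of Theorem~\ref{thm:lower-bound-general-csit}, as the paper itself hints. Recall from \eqref{eq:lower-outage-sum} that $\underline{\varepsilon_{\text{CSIT}}} = \sup_{\xt+\yt=s}\positive{F_{\Xt}(\xt)+F_{\Yt}(\yt)-1}$, which in the notation of \eqref{eq:def-g-thm} is $\sup_{\yt\in\mathbb{R}}\positive{g(\yt)}$ with $g(\yt)=F_{\Xt}(s-\yt)+F_{\Yt}(\yt)-1$. First I would observe that evaluating $g$ at $\yt=0$ gives $g(0)=F_{\Xt}(s)+F_{\Yt}(0)-1$; since $\Yt=-2^{\ratesecret}\snreve\Y\leq 0$ almost surely (as $\Y\geq 0$), we have $F_{\Yt}(0)=1$, hence $g(0)=F_{\Xt}(s)$. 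So the supremum is always at least $F_{\Xt}(s)$, and the claim reduces to showing that under \eqref{eq:condition-best-lower-csit} no other point beats this value, i.e., $g(\yt)\leq F_{\Xt}(s)$ for all $\yt$.

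The second step is to restrict the range of $\yt$ that needs checking. For $\yt>0$ we have $F_{\Yt}(\yt)=1$ (again because $\Yt\leq 0$ a.s.), so $g(\yt)=F_{\Xt}(s-\yt)\leq F_{\Xt}(s)$ automatically by monotonicity of the CDF (since $s-\yt<s$). Thus only $\yt\leq 0$ is relevant, which matches the quantifier in \eqref{eq:condition-best-lower-csit}. For $\yt\leq 0$, I would rewrite the inequality $g(\yt)\leq F_{\Xt}(s)$ as
\begin{equation*}
F_{\Xt}(s-\yt)-F_{\Xt}(s) \leq 1 - F_{\Yt}(\yt)\,.
\end{equation*}
The left-hand side is exactly $\Pr(s<\Xt\leq s-\yt)$ (note $s-\yt\geq s$ since $\yt\leq 0$), and since the paper assumes all distributions are continuous this equals $\Pr(s<\Xt<s-\yt)$. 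The right-hand side is $\Pr(\Yt>\yt)=\Pr(\Yt\geq\yt)$, again by continuity. Therefore the inequality $g(\yt)\leq F_{\Xt}(s)$ for all $\yt\leq 0$ is literally condition \eqref{eq:condition-best-lower-csit}, and combined with $g(0)=F_{\Xt}(s)$ it yields $\sup_{\yt}\positive{g(\yt)}=F_{\Xt}(s)$, which is the asserted value of $\underline{\varepsilon}$.

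I do not anticipate a genuine obstacle here; the only subtlety is bookkeeping the direction of the event $\{s<\Xt<s-\yt\}$ and making sure the sign conventions of $\Yt$ are used consistently (the fact that $\Yt$ is supported on the nonpositive reals is what collapses $F_{\Yt}(0)=1$ and handles the $\yt>0$ case trivially). One could alternatively route the argument through the case distinction in \eqref{eq:lower-bound-general-csit}: the hypothesis forces every interior stationary point $\yt^\star\in\mathcal{Y}$ to satisfy $g(\yt^\star)\leq F_{\Xt}(s)$, so the maximum over $\mathcal{Y}$ (which includes $\yt^\star=0$ with $g(0)=F_{\Xt}(s)$) equals $F_{\Xt}(s)$ in both branches; but the direct argument from \eqref{eq:lower-outage-sum} above is cleaner and avoids invoking the differentiability assumptions implicit in Theorem~\ref{thm:lower-bound-general-csit}.
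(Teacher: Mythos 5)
Your proposal is correct and follows essentially the same route as the paper: the paper derives the corollary by noting that $g(0)=F_{\Xt}(s)$ and that it suffices to have $g(\yt)\leq F_{\Xt}(s)$ for all $\yt<0$, which rearranges directly into condition \eqref{eq:condition-best-lower-csit}. Your additional bookkeeping (using $F_{\Yt}(0)=1$ and continuity to convert the CDF differences into the stated probabilities) just makes explicit what the paper leaves implicit.
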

In Section~\ref{sec:rayleigh}, we will see that the phenomenon discussed in Corollary~\ref{cor:lower-bound-no-eve} can occur, e.g., when both the main channel and the eavesdropper channel observe Rayleigh fading.

\subsection{Upper Bound}\label{sub:upper-main-csit}
\begin{thm}[Upper Bound on the Secrecy Outage Probability with Main \gls{csit}]\label{thm:upper-bound-general-csit}
	Let $\X$ and $\Y$ be random variables over the non-negative real numbers representing the squared magnitude of the channel gains to Bob and Eve, respectively. The transmitter has perfect \gls{csi} about the main channel to Bob.
	The secrecy outage probability is then upper bounded by
	\begin{equation}\label{eq:upper-bound-general-csit}
	\overline{\varepsilon_{\text{CSIT}}} = \begin{cases}
	\min_{\yt^\star \in \mathcal{Z}} h(\yt^\star) & \text{\upshape if } f_{\Yt}^{\prime}(\yt^\star) + f_{\Xt}^{\prime}(s-\yt^\star) \geq 0\\
	1 & \text{\upshape if } f_{\Yt}^{\prime}(\yt^\star) + f_{\Xt}^{\prime}(s-\yt^\star) < 0
	\end{cases},
	\end{equation}
	with
	\begin{equation}
	h(\yt) = F_{\Xt}(s-\yt) + F_{\Yt}(\yt)\,
	\end{equation}
	and	where the minimum is over all $\yt^\star$ which are from the following feasible set
	\begin{equation}
	 \mathcal{Z} = \left\{\yt^\star \;\middle|\; \yt^\star < 0 \wedge f_{\Yt}(\yt^{\star}) = f_{\Xt}(s-\yt^{\star})\right\} \cup \{-\infty, 0\}\,.
	\end{equation}
\end{thm}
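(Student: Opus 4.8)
The plan is to evaluate the right-hand side of \eqref{eq:upper-outage-sum} directly; since that expression is already tight by Theorem~\ref{thm:bounds-sum-frank} (specialized to $L(\xt,\yt)=\xt+\yt$), nothing about achievability needs to be reproven, and the proof parallels the one of Theorem~\ref{thm:lower-bound-general-csit}. First I would reduce the optimization $\inf_{\xt+\yt=s}\leqone{F_{\Xt}(\xt)+F_{\Yt}(\yt)}$ to a one-dimensional problem by parametrizing the line $\xt+\yt=s$ through $\yt$ and setting $h(\yt)=F_{\Xt}(s-\yt)+F_{\Yt}(\yt)$. Because $\Yt=-2^{\ratesecret}\snreve\Y\leq 0$ we have $F_{\Yt}(\yt)=1$ for $\yt\geq 0$, so such parameter values only contribute $\leqone{h(\yt)}=1$; hence it suffices to take the infimum over $\yt\leq 0$, on which range $s-\yt\geq s=2^{\ratesecret}-1\geq 0$, so the argument of $F_{\Xt}$ stays in the support of $\Xt=\snrbob\X\geq 0$. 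This gives $\overline{\varepsilon_{\text{CSIT}}}=\inf_{\yt\leq 0}\leqone{h(\yt)}$.

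Next I would record the behaviour of $h$ at the two ends of the half-line $(-\infty,0]$: at $\yt=0$ continuity of the distributions gives $F_{\Yt}(0)=1$, hence $h(0)=F_{\Xt}(s)+1\geq 1$, while $\lim_{\yt\to-\infty}h(\yt)=F_{\Xt}(\infty)+0=1$. Differentiating, $h'(\yt)=f_{\Yt}(\yt)-f_{\Xt}(s-\yt)$, so the interior stationary points are exactly the $\yt^\star<0$ with $f_{\Yt}(\yt^\star)=f_{\Xt}(s-\yt^\star)$, and $h''(\yt)=f_{\Yt}'(\yt)+f_{\Xt}'(s-\yt)$. This is the origin of the case distinction in \eqref{eq:upper-bound-general-csit}.

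In the case $h''(\yt^\star)\geq 0$, every stationary point is a local minimum of $h$, so the infimum of $h$ over $(-\infty,0]$ is attained either at one of them or at an endpoint; collecting these candidates into $\mathcal{Z}=\{\yt^\star<0 : f_{\Yt}(\yt^\star)=f_{\Xt}(s-\yt^\star)\}\cup\{-\infty,0\}$, with $h(-\infty)$ read as the limit $1$, yields $\overline{\varepsilon_{\text{CSIT}}}=\min_{\yt^\star\in\mathcal{Z}}h(\yt^\star)$; the truncation $\leqone{\cdot}$ is automatic because $-\infty\in\mathcal{Z}$ already contributes the value $1$. In the complementary case $h''(\yt^\star)<0$, every stationary point is a local maximum, so $h$ possesses no interior local minimum; since $h$ is continuous on $(-\infty,0]$ with $h\geq 1$ at both ends, the intermediate value theorem forces $h(\yt)\geq 1$ for all $\yt\leq 0$, hence $\leqone{h(\yt)}\equiv 1$ and $\overline{\varepsilon_{\text{CSIT}}}=1$.

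I expect the main obstacle to be the careful handling of the optimization on the non-compact half-line: ensuring the infimum cannot ``escape'' other than through the limit $h\to 1$ at $-\infty$, dealing with possibly several stationary points of mixed second-order type (so the case split must be read pointwise at each stationary point, with the global infimum then taken over all \emph{local minima} together with the endpoints), and making the ``no interior local minimum $\Rightarrow h\geq 1$'' step fully rigorous. The regularity needed to speak of $h''$, i.e. differentiability of $f_{\Xt}$ and $f_{\Yt}$, is implicit in the statement and would be assumed throughout, exactly as in Theorem~\ref{thm:lower-bound-general-csit}.
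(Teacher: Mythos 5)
Your proposal is correct and follows essentially the same route as the paper's proof: reduce \eqref{eq:upper-outage-sum} to the one-dimensional problem $\inf_{\yt\leq 0}\leqone{h(\yt)}$, note that both boundary values ($h(0)=1+F_{\Xt}(s)$ and $\lim_{\yt\to-\infty}h(\yt)=1$) are at least one, and locate interior minima via the stationary-point condition $f_{\Yt}(\yt^\star)=f_{\Xt}(s-\yt^\star)$ shared with the lower bound. Your treatment is in fact slightly more careful than the paper's (justifying the restriction to $\yt\leq 0$ and making the ``no interior local minimum $\Rightarrow h\geq 1$'' step explicit on the non-compact half-line), but it introduces no new idea.
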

\begin{proof}
	{The proof can be found in Appendix~\ref{app:proof-thm-upper-bound-general-csit}.}
\end{proof}
\subsection{Independent Channels}
For comparison, we also derive the outage probability in the case of independent $\X$ and $\Y$.
In this case, the joint \gls{pdf} is the product of the marginal ones, i.e., $F_{\Xt, \Yt}=F_{\Xt}F_{\Yt}$. Therefore, the outage probability can be calculated as
\begin{align}
\varepsilon_{\text{ind, CSIT}} &= \int\limits_{-\infty}^{0} \int\limits_{0}^{s-\yt} f_{\Xt}(\xt) f_{\Yt}(\yt)\diff{\xt}\diff{\yt}\\
&=\int\limits_{-\infty}^{0} f_{\Yt}(\yt) F_{\Xt}(s-\yt) \diff{\yt}\label{eq:outage-independent-main-csit}\,.
\end{align}

{This general expression will be evaluated explicitly for the case of Rayleigh fading in \eqref{eq:outage-rayleigh-independent-main-csit} in Section~\ref{sec:rayleigh}.}
\section{Statistical Channel State Information at the Transmitter}\label{sec:outage-no-csit}
After we derived upper and lower bounds on the secrecy outage probability for the scenario where perfect \gls{csit} about the main channel is available, we now drop this assumption in this section. 
In the following, we only consider perfect \gls{csir} and only statistical \gls{csit} of legitimate and eavesdropper channels.

In this case, a secrecy outage event not only occurs when the secrecy rate $\ratesecret$ is too high, i.e., event $\eventsum$ from \eqref{eq:def-event-sum}, but also when Bob is not able to decode~\cite[Rem.~5.7]{Bloch2011}. This event $\eventbob$ is now also possible since Alice does not have perfect \gls{csit} and therefore cannot apply power or rate adaption.
The total transmission rate which Bob is supposed to support is $R=\rateconfusion+\ratesecret$, where $\rateconfusion$ is the rate of the dummy messages which are used to confuse the eavesdropper. Therefore, the event $\eventbob$ is given as
\begin{equation}\label{eq:def-event-bob}
	\eventbob:\quad \log_2\left(1+\snrbob\X\right)< \rateconfusion+\ratesecret \;\Leftrightarrow\; \Xt < 2^{R}-1\,,
\end{equation}
where we again use the shorthand $\Xt=\snrbob\X$.

The overall secrecy outage probability in the case of only statistical \gls{csit} is then given by
\begin{equation}
\varepsilon = \Pr(\eventsum \cup \eventbob) = \int\limits_{\mathcal{S}_1\cup\mathcal{S}_2}\!\diff{F_{\Xt, \Yt}(\xt, \yt)}\,.
\end{equation}
The probability is equal to the integral of the joint distribution over the area corresponding to the event $\eventsum\cup\eventbob$. The areas $\mathcal{S}_{1}$ and $\mathcal{S}_{2}$ corresponding to the events $\eventsum$ and $\eventbob$, respectively, are shown in Fig.~\ref{fig:areas-no-csit}. The line $\mathcal{L}=\left\{(\xt, \yt)\,\middle|\,\xt=\max[2^{R}-1, 2^{\ratesecret}-1-\yt]\right\}$ denotes the border of the integration area.
We can define the boundary by $L(\xt, \yt)=\xt-\max\left[2^{\rateconfusion+\ratesecret}-1, 2^{\ratesecret}-1-\yt\right]$ which is non-decreasing in each place. We can therefore apply Theorem~\ref{thm:bounds-sum-frank}. This can also be easily seen from Fig.~\ref{fig:areas-no-csit}.
\begin{figure}
	\centering
	\begin{tikzpicture}
\begin{axis}[
	width=.98\linewidth,
	height=.26\textheight,
	xmin=-5,
	xmax=3,
	ymin=0,
	ymax=5,
	axis on top,
	xlabel={$\Xt$},
	ylabel={$\Yt$},
	xticklabels={,,},
	yticklabels={,,},
]
	\addplot[plot3,pattern=north east lines, pattern color=plot3, area legend] {-x+1}\closedcycle;
	\addlegendentry{$\mathcal{S}_{1}$};
	
	\addplot[plot1,pattern=north west lines, pattern color=plot1, domain=-5:-1.5, area legend] {5}\closedcycle;
	\addlegendentry{$\mathcal{S}_{2}$};

	\node[anchor=east, text=plot1, fill=white] at (axis cs: -1.55, 4.2) {$\xt\leq2^{\rateconfusion+\ratesecret}-1$};
	\node[anchor=north east, text=plot3, fill=white] at (axis cs: -.45, 1.4) {$\xt+\yt\leq2^{\ratesecret}-1$};
	
	\addplot[very thick, black] table {
		-1.5	5
		-1.5	2.5
		1	0
	};
	\addlegendentry{$\mathcal{L}$};
	\node[anchor=south west, text=black] at (axis cs: -.4, 1.4) {$\xt+\yt = 2^{\ratesecret}-1$};
	\node[anchor=south west, text=black] at (axis cs: -1.5, 3.2) {$\xt = 2^{\rateconfusion+\ratesecret}-1$};

\end{axis}
\end{tikzpicture}
	
	\vspace*{-0.5em}
	\caption{Visualization of the areas of the outage event with only statistical \gls{csit}. The areas $\mathcal{S}_{1}$ and $\mathcal{S}_2$ correspond to the single events $\eventsum$ and $\eventbob$, respectively. The line $\mathcal{L}$ denotes the border of the area of the event $\eventsum\cup\eventbob$.}
	\label{fig:areas-no-csit}
\end{figure}
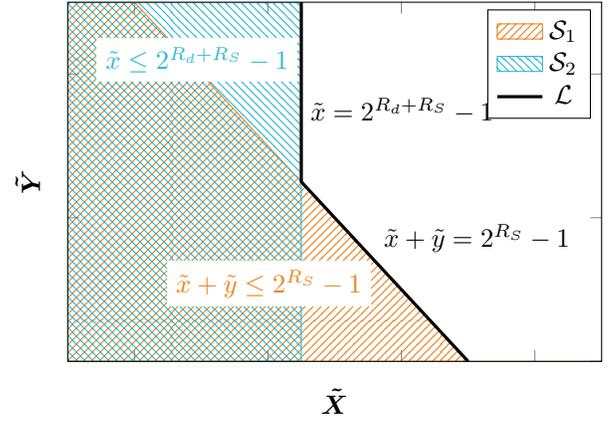

With Theorem~\ref{thm:bounds-sum-frank}, we get the following optimization problems that we need to solve to get the bounds on the secrecy outage probability for the scenario that only statistical \gls{csit} is available
\begin{align}
\underline{\varepsilon_{\text{no}}} = \inf_{F_{\Xt, \Yt}} \varepsilon &= \sup_{\mathcal{L}} \positive{F_{\Xt}(\xt)+F_{\Yt}(\yt)-1}\label{eq:lower-bound-general-no-csit}\\
\overline{\varepsilon_{\text{no}}} = \sup_{F_{\Xt, \Yt}} \varepsilon &= \inf_{\mathcal{L}} \leqone{F_{\Xt}(\xt)+F_{\Yt}(\yt)}\label{eq:upper-bound-general-no-csit}\,.
\end{align}
In the following, we will take a closer look at these upper and lower bounds and compare them to the scenario from Section~\ref{sec:outage-csit-main} where perfect \gls{csit} was available.

\subsection{Lower Bound}\label{sub:lower-no-csit}
\begin{thm}[Lower Bound on the Secrecy Outage Probability without Perfect \gls{csit}]\label{thm:lower-bound-general-no-csit}
	Let $\X$ and $\Y$ be random variables over the non-negative real numbers representing the squared magnitude of the channel gains to Bob and Eve, respectively. The transmitter has only statistical \gls{csi}.
	The secrecy outage probability is then lower bounded by
	\begin{equation}\label{eq:lower-bound-no-csit}
		\underline{\varepsilon_{\text{no}}} = \max\left[F_{\Xt}(t), \max_{\yt^\star \in \mathcal{A}} g_1(\yt^\star)\right]
	\end{equation}
	with
	\begin{equation}
		g_1(\yt) = F_{\Xt}(s-\yt) + F_{\Yt}(\yt) - 1\,,
	\end{equation}
	and	where the maximum is over all $\yt^\star$ from the set
	\begin{equation}
	\mathcal{A} = \left\{\yt^\star \;\middle|\; \yt^\star<s-t \wedge f_{\Yt}(\yt^{\star}) = f_{\Xt}(s-\yt^{\star})\right\}\,,
	\end{equation}
	and the shorthands are $s=2^{\ratesecret}-1$ and $t=2^{\rateconfusion+\ratesecret}-1$.
\end{thm}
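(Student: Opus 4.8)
The plan is to specialize Theorem~\ref{thm:bounds-sum-frank} to the present $\mathcal{L}$ and then convert the resulting one-dimensional supremum into an explicit optimum, exactly as in the proof of Theorem~\ref{thm:lower-bound-general-csit}. Starting from \eqref{eq:lower-bound-general-no-csit}, the feasible set is the line $\mathcal{L}=\{(\xt,\yt)\mid \xt=\max[t,\,s-\yt]\}$ with $s=2^{\ratesecret}-1$ and $t=2^{\rateconfusion+\ratesecret}-1$; since $\rateconfusion\ge 0$ we have $s\le t$, hence $s-t\le 0$. First I would split $\mathcal{L}$ into the \emph{diagonal} piece $\{\xt+\yt=s,\ \yt\le s-t\}$ and the \emph{flat} piece $\{\xt=t,\ \yt\ge s-t\}$ (cf.\ Fig.~\ref{fig:areas-no-csit}); both lie in the region $\yt\le 0$ that is relevant for $\Yt=-2^{\ratesecret}\snreve\Y$.

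On the flat piece the objective is $F_{\Xt}(t)+F_{\Yt}(\yt)-1$, which is non-decreasing in $\yt$; since $\Pr(\Yt\le 0)=1$ and $F_{\Yt}$ is continuous, letting $\yt\uparrow 0$ shows the supremum over this piece is $F_{\Xt}(t)$. On the diagonal piece the objective equals $g_1(\yt)=F_{\Xt}(s-\yt)+F_{\Yt}(\yt)-1$ for $\yt\in(-\infty,\,s-t]$, with derivative $g_1'(\yt)=f_{\Yt}(\yt)-f_{\Xt}(s-\yt)$, so its interior stationary points are precisely the elements of $\mathcal{A}$. The two boundary behaviours are harmless: $g_1(\yt)\to 0$ as $\yt\to-\infty$, and at the corner $g_1(s-t)=F_{\Xt}(t)+F_{\Yt}(s-t)-1\le F_{\Xt}(t)$, which is already captured by the flat-piece value. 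Hence the supremum of $g_1$ over the diagonal piece either does not exceed $F_{\Xt}(t)$ or is attained at some $\yt^\star\in\mathcal{A}$; taking the larger of the two pieces (the outer $\positive{\cdot}$ being vacuous because $F_{\Xt}(t)\ge 0$) yields \eqref{eq:lower-bound-no-csit}.

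The first-order analysis is routine; the step that needs care is the boundary bookkeeping that allows $\mathcal{A}$ to omit both the corner $\yt=s-t$ and the tail $\yt\to-\infty$, together with the degenerate cases: $\rateconfusion=0$, where the flat piece collapses to a point and one recovers the lower bound of Theorem~\ref{thm:lower-bound-general-csit} with $F_{\Xt}(s)$, and $\mathcal{A}=\emptyset$, where $g_1$ is monotone on the half-line so that $F_{\Xt}(t)$ alone determines the bound and the second term in \eqref{eq:lower-bound-no-csit} is inactive. The standing assumption that $F_{\Xt}$ and $F_{\Yt}$ are continuous guarantees that the relevant suprema are attained. Because all of this parallels the proof of Theorem~\ref{thm:lower-bound-general-csit}, I expect the written-out argument to be essentially a pointer to Appendix~\ref{app:proof-thm-lower-bound-general-csit} with the extra contribution of the flat segment folded in.
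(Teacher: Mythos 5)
Your proposal is correct and follows essentially the same route as the paper's proof: the paper likewise splits the boundary into the two pieces (its $g_2$ on $\yt\geq s-t$, maximized at $\yt=0$ to give $F_{\Xt}(t)$, and its $g_1$ on $\yt<s-t$, whose interior stationary points form $\mathcal{A}$ and whose boundary values are dominated), and then takes the maximum of the two. Your extra remarks on the vacuous positive part and the degenerate cases are consistent with, though not spelled out in, the paper's argument.
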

\begin{proof}
	{The proof can be found in Appendix~\ref{app:proof-thm-lower-bound-general-no-csit}.}
\end{proof}

Analogue to Corollary~\ref{cor:lower-bound-no-eve}, we can state the following sufficient condition for the lower bound without perfect \gls{csit}.
\begin{cor}\label{cor:lower-bound-no-eve-no-csit}
	Let the random variables $\Xt$ and $\Yt$ be as previously defined. The transmitter has only statistical \gls{csi}.
	In this case, the secrecy outage probability is given as $\underline{\varepsilon_{\text{no}}} = F_{\Xt}(t)$ if
	\begin{equation}
	F_{\Xt}(s-\yt) - F_{\Xt}(t) \leq 1 - F_{\Yt}(\yt)
	\end{equation}
	holds for $\yt<s-t$.
\end{cor}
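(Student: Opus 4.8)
The plan is to derive the corollary directly from Theorem~\ref{thm:lower-bound-general-no-csit} by showing that the stated hypothesis forces the second term in the maximum \eqref{eq:lower-bound-no-csit} to be dominated by the first. First I would rewrite the assumed inequality: for every $\yt<s-t$, the condition $F_{\Xt}(s-\yt) - F_{\Xt}(t) \leq 1 - F_{\Yt}(\yt)$ is, after rearranging terms, equivalent to
\begin{equation*}
g_1(\yt) = F_{\Xt}(s-\yt) + F_{\Yt}(\yt) - 1 \leq F_{\Xt}(t)\,.
\end{equation*}

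Next I would observe that the feasible set in Theorem~\ref{thm:lower-bound-general-no-csit} satisfies $\mathcal{A} \subseteq \{\yt^\star \mid \yt^\star < s-t\}$, so the pointwise bound above applies to every $\yt^\star \in \mathcal{A}$. Hence $\max_{\yt^\star \in \mathcal{A}} g_1(\yt^\star) \leq F_{\Xt}(t)$, with the usual convention that a maximum over the empty set does not exceed $F_{\Xt}(t)$, so that the degenerate case $\mathcal{A}=\emptyset$ is covered as well. Substituting this into \eqref{eq:lower-bound-no-csit} yields $\underline{\varepsilon_{\text{no}}} = \max\left[F_{\Xt}(t),\, \max_{\yt^\star\in\mathcal{A}} g_1(\yt^\star)\right] = F_{\Xt}(t)$, which is the claim.

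I expect no serious obstacle here, since the corollary is essentially a restatement of a sufficient condition already implicit in Theorem~\ref{thm:lower-bound-general-no-csit}; the only point requiring a little care is the bookkeeping of the threshold ranges. It is worth noting, for the operational interpretation, that $F_{\Xt}(s-\yt)-F_{\Xt}(t) = \Pr(t < \Xt \leq s-\yt)$, which is nonnegative precisely when $\yt < s-t$ (explaining why the condition is only imposed on that range), while $1-F_{\Yt}(\yt) = \Pr(\Yt \geq \yt)$. Thus the condition reads $\Pr(t<\Xt\leq s-\yt)\leq\Pr(\Yt\geq\yt)$, mirroring Corollary~\ref{cor:lower-bound-no-eve}: the probability mass of the main channel lying between the two decoding thresholds must be \enquote{coverable} by the tail of the eavesdropper channel, in which case the adverse correlation cannot push the outage probability below $F_{\Xt}(t)$.
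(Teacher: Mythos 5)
Your proposal is correct and matches the paper's (implicit) argument: the paper derives this corollary exactly as the analogue of Corollary~\ref{cor:lower-bound-no-eve}, i.e.\ by noting that the hypothesis rearranges to $g_1(\yt)\leq F_{\Xt}(t)$ on $\yt<s-t$, which dominates the second term of the maximum in \eqref{eq:lower-bound-no-csit}. Your additional probabilistic reading of the condition is a nice touch but not a departure from the paper's route.
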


\subsection{Upper Bound}
For the upper bound on $\varepsilon$ we can combine the techniques from Sections~\ref{sub:upper-main-csit} and \ref{sub:lower-no-csit}.

\begin{thm}[Upper Bound on the Secrecy Outage Probability without Perfect \gls{csit}]\label{thm:upper-bound-general-no-csit}
	Let $\X$ and $\Y$ be random variables over the non-negative real numbers representing the squared magnitude of the channel gains to Bob and Eve, respectively. The transmitter has only statistical \gls{csi}.
	The secrecy outage probability is then upper bounded by
	\begin{equation}\label{eq:upper-bound-no-csit}
	\overline{\varepsilon_{\text{no}}} = \min\left[F_{\Xt}\left(t\right) + F_{\Yt}\left(s-t\right), \min_{\yt^\star \in \mathcal{A}} h_1(\yt^\star), 1\right]\,.
	\end{equation}
	with
	\begin{equation}
	h_1(\yt) = F_{\Xt}(s-\yt) + F_{\Yt}(\yt)\,,
	\end{equation}
	and	where the minimum is over all $\yt^\star$ from the set
	\begin{equation}\label{eq:condition-yt-star-no-csit}
	\mathcal{A} = \left\{\yt^\star \;\middle|\;\yt^\star<s-t \wedge f_{\Yt}(\yt^{\star}) = f_{\Xt}(s-\yt^{\star})\right\}\,.
	\end{equation}
\end{thm}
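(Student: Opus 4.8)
The plan is to evaluate the infimum in~\eqref{eq:upper-bound-general-no-csit} explicitly, along the lines of the proof of Theorem~\ref{thm:upper-bound-general-csit}, the only new feature being that the boundary line $\mathcal{L}$ now consists of two linear pieces. Since $\leqone{\cdot}$ is non-decreasing, I would first pull it outside the infimum and work with the uncapped objective $G(\xt,\yt)=F_{\Xt}(\xt)+F_{\Yt}(\yt)$, reinstating the cap at the end. Parametrizing $\mathcal{L}$ by $\yt$ through $\xt(\yt)=\max\left[t,\,s-\yt\right]$, and noting $t=2^{\rateconfusion+\ratesecret}-1\geq 2^{\ratesecret}-1=s$ so that $s-t\leq 0$, the line splits into the vertical segment $\{\xt=t,\ \yt\geq s-t\}$ and the diagonal segment $\{\xt=s-\yt,\ \yt\leq s-t\}$.

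On the vertical segment, $G=F_{\Xt}(t)+F_{\Yt}(\yt)$ is non-decreasing in $\yt$ because $F_{\Yt}$ is a \gls{cdf}; hence its infimum over this piece is attained at the smallest admissible value $\yt=s-t$ and equals $F_{\Xt}(t)+F_{\Yt}(s-t)$, which is the first term in~\eqref{eq:upper-bound-no-csit}. On the diagonal segment, $G$ equals $h_1(\yt)=F_{\Xt}(s-\yt)+F_{\Yt}(\yt)$ restricted to $\yt\leq s-t$. Since all marginals are assumed continuous, $h_1$ is continuously differentiable with $h_1^{\prime}(\yt)=f_{\Yt}(\yt)-f_{\Xt}(s-\yt)$, so every interior stationary point satisfies $f_{\Yt}(\yt)=f_{\Xt}(s-\yt)$, i.e., lies in the set $\mathcal{A}$ of~\eqref{eq:condition-yt-star-no-csit}. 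The only remaining candidates for $\inf_{\yt\leq s-t}h_1$ are the right endpoint $\yt=s-t$, where $h_1(s-t)=F_{\Xt}(t)+F_{\Yt}(s-t)$ --- so the two pieces glue continuously --- and the limit $\yt\to-\infty$, where $F_{\Xt}(s-\yt)\to 1$ and $F_{\Yt}(\yt)\to 0$, contributing the value $1$ (in general not attained).

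Collecting the candidate values gives $\inf_{\mathcal{L}}G=\min\left[F_{\Xt}(t)+F_{\Yt}(s-t),\ \min_{\yt^\star\in\mathcal{A}}h_1(\yt^\star),\ 1\right]$, and re-applying $\leqone{\cdot}$ is then vacuous because a $1$ already appears inside the minimum; this is exactly~\eqref{eq:upper-bound-no-csit}. I expect the main obstacle to be the careful treatment of non-attainment and of spurious stationary points: one has to check that whenever neither the corner $\yt=s-t$ nor any element of $\mathcal{A}$ is the actual minimizer --- for instance when $h_1$ is monotone along the whole diagonal ray, or when $\mathcal{A}$ is empty or contains only local maxima of $h_1$ --- the infimum is still returned correctly by the explicit $1$ (respectively by the corner term), and conversely that keeping all of $\mathcal{A}$ together with the extra $1$ never loosens the bound, since $G$, and hence the largest achievable outage probability, can genuinely equal $1$. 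A subsidiary point is to justify reducing the search over the continuum $\mathcal{L}$ to this finite candidate set, which is precisely where the standing continuity assumption on $F_{\Xt}$ and $F_{\Yt}$ enters; achievability of the bound (that some joint distribution attains it) carries over directly from Theorem~\ref{thm:bounds-sum-frank}.
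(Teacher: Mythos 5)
Your proposal is correct and follows essentially the same route as the paper's proof: you split the boundary line $\mathcal{L}$ into the vertical piece (where the objective reduces to $F_{\Xt}(t)+F_{\Yt}(\yt)$, monotone in $\yt$, minimized at $\yt=s-t$) and the diagonal piece (where the objective is $h_1$ and the candidates are the stationary points in $\mathcal{A}$, the corner $\yt=s-t$, and the limit value $1$ as $\yt\to-\infty$), exactly as the paper does with its functions $h_2$ and $h_1$. The only cosmetic difference is that you pull the $\leqone{\cdot}$ cap outside the infimum and reinstate it via the explicit $1$ in the minimum, which the paper handles equivalently by including $1$ as a candidate directly.
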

\begin{proof}
	{The proof can be found in Appendix~\ref{app:proof-thm-upper-bound-general-no-csit}.}
\end{proof}

\begin{rem}\label{rem:upper-bound-no-csit-single-ytstar}
	Note that $h_1(s-t)=F_{\Xt}(t)+F_{\Yt}(s-t)$ holds. If there exists at most one $\yt^\star$ in the feasible set \eqref{eq:condition-yt-star-no-csit}, we get a simplified version of \eqref{eq:upper-bound-no-csit} %
	\begin{equation*}
	\overline{\varepsilon_\text{no}} = \leqone{h_1\left(\min\left[\yt^\star, s-t\right]\right)}\,.
	\end{equation*}
\end{rem}

\begin{rem}
	The upper bound $\overline{\varepsilon}$ corresponds to the worst-case scenario. With reference to the introduction, this is an important design guideline for a system designer. Especially in critical applications, the system should be designed in such a way that the security can be guaranteed even in the worst-case.
\end{rem}

\subsection{Independent Channels}
In general, the outage probability can be calculated as the integral of the joint distribution $F_{\Xt, \Yt}$ over the region $\mathcal{S}_1\cup\mathcal{S}_2$ from Fig.~\ref{fig:areas-no-csit}. In the case of independent channels, the joint distribution is given as the product of the marginals. Thus, the outage probability can be calculated as \eqref{eq:outage-independent-no-csit} {at the bottom of the page}.
\begin{figure*}[b]
	\noindent\rule{\textwidth}{.5pt}
	\begin{equation}
	\varepsilon_{\text{ind, no}} = \int_{0}^{2^{\rateconfusion+\ratesecret}-1} f_{\Xt}(\xt)\diff{\xt} + \int\limits_{-\infty}^{2^{\ratesecret}-2^{\rateconfusion+\ratesecret}}\int\limits_{2^{\rateconfusion+\ratesecret}-1}^{2^{\ratesecret}-1-\yt} f_{\Xt}(\xt) f_{\Yt}(\yt) \diff{\xt} \diff{\yt}\label{eq:outage-independent-no-csit}\,.
	\end{equation}
\end{figure*}

{The solution for the specific case of Rayleigh fading, will be derived in Section~\ref{sub:rayleigh-no-csit}.}
\section{Alternative Secrecy Outage Definition}\label{sec:full-outage}
The definition of a secrecy outage based on events $\eventsum$ and $\eventbob$ from \eqref{eq:def-event-sum} and \eqref{eq:def-event-bob}, respectively, does not take events into account when Eve is able to decode parts of the secure messages.
If we also take this into account, we get an additional outage event $\eventeve$, which is given as~\cite{Zhou2011}
\begin{equation*}
\eventeve:\quad \log_2\left(1+\snreve\Y\right) > \rateconfusion\,,
\end{equation*}
or equivalently
\begin{equation}\label{eq:def-event-eve}
\eventeve:\quad -2^{\ratesecret}\snreve\Y = \Yt < 2^{\ratesecret} - 2^{\rateconfusion+\ratesecret} = s-t\,,
\end{equation}
where we use the same definitions for $\Yt$, $s$, and $t$ as in the previous sections.

\subsection{Perfect CSIT about the Main Channel}
When perfect \gls{csit} about the main channel is available, we extend the definition of the outage event from \eqref{eq:def-event-sum} by \eqref{eq:def-event-eve} to get the alternative outage event
\begin{equation}\label{eq:def-event-full-outage-main-csit}
E_{\text{alt,CSIT}} = \eventsum \cup \eventeve\,.
\end{equation}
Note that this corresponds to
\begin{equation*}
\xt+\yt \leq s \vee \yt \leq s-t\,,
\end{equation*}
which is similar to the previously considered outage event in Section~\ref{sec:outage-no-csit}. The results from Section~\ref{sec:outage-no-csit} can be adapted by exchanging $\xt$ and $\yt$ and adapting the boundaries.
\begin{thm}\label{thm:bounds-general-full-outage-main-csit}
	Let $\X$ and $\Y$ be random variables over the non-negative real numbers representing the squared magnitude of the channel gains to Bob and Eve, respectively. The transmitter has perfect \gls{csi} about the main channel to Bob.
	The secrecy outage probability defined according to \eqref{eq:def-event-full-outage-main-csit} is then lower bounded by
	\begin{equation}%
	\underline{\varepsilon_{\text{alt,CSIT}}} = \max\left[F_{\Xt}(s), F_{\Yt}(s-t), \max_{\xt^\star \in \mathcal{B}} g_1(\xt^\star)\right]
	\end{equation}
	with
	\begin{equation}
	g_1(\xt) = F_{\Xt}(\xt) + F_{\Yt}(s-\xt) - 1\,,
	\end{equation}
	and	where the maximum is over all $\xt^\star$ from the set
	\begin{equation}
	\mathcal{B} = \left\{\xt^\star \;\middle|\; s\leq\xt^\star<t \wedge f_{\Xt}(\xt^{\star}) = f_{\Yt}(s-\xt^{\star})\right\}\,,
	\end{equation}
	and the shorthands are $s=2^{\ratesecret}-1$ and $t=2^{\rateconfusion+\ratesecret}-1$.
	The upper bound is given as
	\begin{equation}%
	\overline{\varepsilon_{\text{alt,CSIT}}} = \min\left[F_{\Xt}\left(t\right) + F_{\Yt}\left(s-t\right), \min_{\xt^\star \in \mathcal{B}} h_1(\xt^\star), 1\right]\,.
	\end{equation}
	with
	\begin{equation}
	h_1(\xt) = F_{\Xt}(\xt) + F_{\Yt}(s-\xt)\,.
	\end{equation}
\end{thm}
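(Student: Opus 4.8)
The plan is to mirror the argument of Section~\ref{sec:outage-no-csit}. By \eqref{eq:def-event-sum} and \eqref{eq:def-event-eve}, the event $E_{\text{alt,CSIT}}=\eventsum\cup\eventeve$ is exactly $\{\yt\le\max[s-t,\,s-\xt]\}$, which is the mirror image of the region $\mathcal{S}_1\cup\mathcal{S}_2$ of Section~\ref{sec:outage-no-csit} with $\xt$ and $\yt$ interchanged and the threshold $t$ on $\xt$ replaced by the threshold $s-t$ on $\yt$. First I would set $L(\xt,\yt)=\yt-\max[s-t,\,s-\xt]$; this is continuous and non-decreasing in each place (monotone in $\xt$ because $-\max[s-t,\,s-\xt]=\min[t-s,\,\xt-s]$), so Theorem~\ref{thm:bounds-sum-frank} applies and gives
\begin{align*}
\underline{\varepsilon_{\text{alt,CSIT}}}&=\sup_{\mathcal{L}}\positive{F_{\Xt}(\xt)+F_{\Yt}(\yt)-1},\\
\overline{\varepsilon_{\text{alt,CSIT}}}&=\inf_{\mathcal{L}}\leqone{F_{\Xt}(\xt)+F_{\Yt}(\yt)},
\end{align*}
with $\mathcal{L}=\{(\xt,\yt)\mid\yt=\max[s-t,\,s-\xt]\}$. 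Since $\rateconfusion\ge0$ we have $s\le t$, so $\mathcal{L}$ decomposes into the slanted segment $\{(\xt,\,s-\xt)\mid 0\le\xt\le t\}$ and the horizontal ray $\{(\xt,\,s-t)\mid\xt\ge t\}$. The one point to keep in mind is that $\Yt=-2^{\ratesecret}\snreve\Y\le0$, so $F_{\Yt}(\yt)=1$ for all $\yt\ge0$; in particular $F_{\Yt}(s-\xt)=1$ on the part $0\le\xt\le s$ of the slanted segment.

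For the lower bound I would evaluate $F_{\Xt}(\xt)+F_{\Yt}(\yt)-1$ on each piece of $\mathcal{L}$. On the horizontal ray it equals $F_{\Xt}(\xt)+F_{\Yt}(s-t)-1$, non-decreasing in $\xt$, with supremum $F_{\Yt}(s-t)$ as $\xt\to\infty$. On the slanted segment with $\xt\in[0,s]$ it reduces to $F_{\Xt}(\xt)$, maximal at $\xt=s$ with value $F_{\Xt}(s)$. On the slanted segment with $\xt\in[s,t]$ it equals $g_1(\xt)=F_{\Xt}(\xt)+F_{\Yt}(s-\xt)-1$, whose interior stationary points are precisely the $\xt^\star$ with $f_{\Xt}(\xt^\star)=f_{\Yt}(s-\xt^\star)$, that is, the elements of $\mathcal{B}$, while its endpoint values $g_1(s)=F_{\Xt}(s)$ and $g_1(t)=F_{\Xt}(t)+F_{\Yt}(s-t)-1\le F_{\Yt}(s-t)$ are already among the candidates above. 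Since $F_{\Xt}(s)\ge0$ the outer $\positive{\cdot}$ is inactive, and collecting the candidates gives $\underline{\varepsilon_{\text{alt,CSIT}}}=\max\left[F_{\Xt}(s),\,F_{\Yt}(s-t),\,\max_{\xt^\star\in\mathcal{B}}g_1(\xt^\star)\right]$, as claimed.

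The upper bound is obtained the same way with $\inf$ replacing $\sup$ and $\leqone{\cdot}$ replacing $\positive{\cdot}$: on the horizontal ray $h_1(\xt)=F_{\Xt}(\xt)+F_{\Yt}(s-t)$ is non-decreasing, so its infimum $F_{\Xt}(t)+F_{\Yt}(s-t)$ is attained at $\xt=t$; on the slanted segment with $\xt\in[0,s]$ one has $h_1=F_{\Xt}(\xt)+1\ge1$, contributing only the clamp value $1$; on the slanted segment with $\xt\in[s,t]$, $h_1(\xt)=F_{\Xt}(\xt)+F_{\Yt}(s-\xt)$ is stationary exactly on $\mathcal{B}$, with endpoint values $h_1(s)\ge1$ and $h_1(t)=F_{\Xt}(t)+F_{\Yt}(s-t)$. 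Collecting these yields $\overline{\varepsilon_{\text{alt,CSIT}}}=\min\left[F_{\Xt}(t)+F_{\Yt}(s-t),\,\min_{\xt^\star\in\mathcal{B}}h_1(\xt^\star),\,1\right]$. The main obstacle is the bookkeeping of the case split of $\mathcal{L}$ at $\xt=s$ and $\xt=t$: one has to verify that every endpoint contribution is either reproduced by, or dominated by, one of the closed-form terms, and that any interior optimum must be a stationary point, so that the continuous optimization collapses to the discrete set $\mathcal{B}$ together with two closed-form values. The only feature genuinely new relative to Section~\ref{sec:outage-no-csit} is the split at $\xt=s$ forced by $\Yt\le0$, which is what produces the extra term $F_{\Xt}(s)$ in the lower bound; the rest transcribes Theorems~\ref{thm:lower-bound-general-no-csit} and~\ref{thm:upper-bound-general-no-csit}.
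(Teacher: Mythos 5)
Your proposal is correct and follows essentially the same route as the paper: the paper's proof is a one-line reduction to Theorems~\ref{thm:lower-bound-general-no-csit} and \ref{thm:upper-bound-general-no-csit} by exchanging $\xt$ and $\yt$ and adjusting the boundaries, which is exactly the piecewise optimization along $\mathcal{L}$ via Theorem~\ref{thm:bounds-sum-frank} that you carry out explicitly. Your write-up is in fact more complete than the paper's, since it makes visible the one step the ``simply exchanging'' argument glosses over, namely that the portion of the slanted segment with $\yt=s-\xt\geq 0$ (where $F_{\Yt}\equiv 1$) is what generates the extra candidate $F_{\Xt}(s)$ in the lower bound.
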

\begin{proof}
	From \eqref{eq:def-event-full-outage-main-csit} and with reference to Fig.~\ref{fig:areas-no-csit}, it is easy to see that the theorem follows from Theorems~\ref{thm:lower-bound-general-no-csit} and \ref{thm:upper-bound-general-no-csit} by simply exchanging $\xt$ and $\yt$ and adjusting the boundaries according to the events $\eventsum$ and $\eventeve$.
\end{proof}

\subsection{Statistical CSIT about the Main Channel}
When taking $\eventeve$ into account, we get the following definition for the secrecy outage event
\begin{equation}\label{eq:def-event-full-outage-no-csit}
E_{\text{alt,no}} = \eventsum \cup \eventbob \cup \eventeve = \eventbob \cup \eventeve\,,
\end{equation}
when only statistical \gls{csit} is available.
The probability of this event can be bounded as follows.
\begin{thm}\label{thm:bounds-general-full-outage-no-csit}
	Let $\X$ and $\Y$ be random variables over the non-negative real numbers representing the squared magnitude of the channel gains to Bob and Eve, respectively. The transmitter has only statistical \gls{csi}.
	The secrecy outage probability defined according to \eqref{eq:def-event-full-outage-no-csit} is then bounded by
	\begin{align}
		\overline{\varepsilon_{\text{alt,no}}} &= \bar{W}(F_{\Xt}(t), F_{\Yt}(s-t))\\
		\underline{\varepsilon_{\text{alt,no}}} &= \bar{M}(F_{\Xt}(t), F_{\Yt}(s-t))\,,
	\end{align}
	where $\bar{C}(a, b) = a + b - C(a, b)$ is the dual of the copula $C$.
\end{thm}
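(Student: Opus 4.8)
\emph{Proof proposal.} The plan is to treat this as the degenerate special case of Theorem~\ref{thm:bounds-sum-frank} in which the boundary line $\mathcal{L}$ reduces to an axis-aligned corner, so that the two optimizations in \eqref{eq:lower-bound-general-no-csit} and \eqref{eq:upper-bound-general-no-csit} collapse to a single evaluation. First I would record the reduction $E_{\text{alt,no}}=\eventbob\cup\eventeve$ that is already asserted in \eqref{eq:def-event-full-outage-no-csit}: if $\Xt\geq t$ and $\Yt\geq s-t$, then $\Xt+\Yt\geq s$, so $\eventsum\subseteq\eventbob\cup\eventeve$ and the union is unchanged by adjoining $\eventsum$. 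Unlike the events in Sections~\ref{sec:outage-csit-main}--\ref{sec:outage-no-csit}, both $\eventbob:\Xt<t$ and $\eventeve:\Yt<s-t$ are half-planes parallel to a coordinate axis, so the outage region is exactly the complement of the rectangle $[t,\infty)\times[s-t,\infty)$.

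Next I would apply inclusion--exclusion. Since all distributions are continuous,
\begin{equation*}
\Pr(\eventbob\cup\eventeve)=F_{\Xt}(t)+F_{\Yt}(s-t)-\Pr(\eventbob\cap\eventeve),
\end{equation*}
and $\Pr(\eventbob\cap\eventeve)=F_{\Xt,\Yt}(t,s-t)=C\bigl(F_{\Xt}(t),F_{\Yt}(s-t)\bigr)$ by Sklar's theorem, where $C$ is the (unknown) connecting copula. Writing $a=F_{\Xt}(t)$ and $b=F_{\Yt}(s-t)$, the outage probability equals $a+b-C(a,b)=\bar C(a,b)$ exactly, for whichever copula $C$ governs the joint law.

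The bounds then follow from the Fr\'echet--Hoeffding inequalities $W(a,b)\leq C(a,b)\leq M(a,b)$ with $M(a,b)=\min[a,b]$, already invoked in the proof of Theorem~\ref{thm:bounds-sum-frank}. Because $\bar C(a,b)=a+b-C(a,b)$ is decreasing in $C$, replacing $C$ by its largest admissible value $M$ gives the lower bound $\bar M(a,b)\leq\bar C(a,b)$, and replacing $C$ by its smallest admissible value $W$ gives the upper bound $\bar C(a,b)\leq\bar W(a,b)$; this is exactly the claimed $\underline{\varepsilon_{\text{alt,no}}}=\bar M(F_{\Xt}(t),F_{\Yt}(s-t))$ and $\overline{\varepsilon_{\text{alt,no}}}=\bar W(F_{\Xt}(t),F_{\Yt}(s-t))$. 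For achievability, both $W$ and $M$ are themselves copulas in two dimensions \cite{Nelsen2006}, realized by the countermonotone and comonotone couplings of $\Xt$ and $\Yt$ (which preserve the prescribed marginals), and for these joint laws the chain holds with equality. Equivalently, one can derive the same statement from Theorem~\ref{thm:bounds-sum-frank} applied to $L(\xt,\yt)=\min[\xt-t,\,\yt-(s-t)]$, which is continuous and non-decreasing in each place with $\{L<0\}$ equal to the outage region: here $\mathcal{L}=\{L=0\}$ is the L-shaped corner $\{\xt=t,\ \yt\geq s-t\}\cup\{\xt\geq t,\ \yt=s-t\}$, so $\inf_{\mathcal{L}}\leqone{F_{\Xt}(\xt)+F_{\Yt}(\yt)}$ is attained at the corner point $(t,s-t)$ and $\sup_{\mathcal{L}}\positive{F_{\Xt}(\xt)+F_{\Yt}(\yt)-1}$ at the two infinite ends of $\mathcal{L}$.

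There is no genuine analytic obstacle here. The only points requiring care are the reduction $E_{\text{alt,no}}=\eventbob\cup\eventeve$ (so that the region is a rectangle complement) and the observation that, the constraint set being an axis-aligned corner rather than a sloped curve, the optimization collapses to an endpoint evaluation and no interior stationarity condition of the type $f_{\Yt}(\yt^\star)=f_{\Xt}(s-\yt^\star)$ appearing in Theorems~\ref{thm:lower-bound-general-no-csit} and \ref{thm:upper-bound-general-no-csit} can arise.
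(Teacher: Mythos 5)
Your proposal is correct and follows essentially the same route as the paper's proof: write $\varepsilon_{\text{alt,no}}$ exactly as $\bar{C}\bigl(F_{\Xt}(t),F_{\Yt}(s-t)\bigr)$ via inclusion--exclusion and Sklar's theorem, then apply the Fr\'echet--Hoeffding bounds $W\leq C\leq M$ together with the monotonicity of $\bar{C}$ in $C$. The additions you make (explicitly verifying $\eventsum\subseteq\eventbob\cup\eventeve$, noting achievability by the comonotone and countermonotone couplings, and the alternative derivation via Theorem~\ref{thm:bounds-sum-frank}) are sound but not needed beyond what the paper already does.
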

\begin{proof}
	{The proof can be found in Appendix~\ref{app:proof-thm-bounds-general-full-outage-no-csit}.}
\end{proof}
\section{Bounds for Rayleigh Fading}\label{sec:rayleigh}
We now consider the example of Rayleigh fading, i.e., $\X\sim\exp(\lx)$ and $\Y\sim\exp(\ly)$ where $\lx$ and $\ly$ are the inverse means of the channel gains, i.e., $\expect{\X} = 1/\lx$ and $\expect{\Y} = 1/\ly$. 
The auxiliary variables are then distributed according to $\Xt\sim\exp(\lx/\snrbob)$ and $-\Yt\sim\exp(\ly/(2^{\ratesecret}\snreve))$. Their \glspl{cdf} are $F_{\Xt}(\xt) = 1-\exp(-\xt\lxt)$ and $F_{\Yt}(\yt) = \exp(\yt\lyt)$, with the (inverse) scale parameters $\lxt = \lx/\snrbob$ and $\lyt = \ly/(2^{\ratesecret}\snreve)$.
Note that these distributions have monotone densities and have only at most one stationary point $\yt^\star$.

\subsection{Perfect Channel State Information at the Transmitter}
We start with the assumption from Section~\ref{sec:outage-csit-main}, that Alice has perfect \gls{csi} about the channel to Bob.

\subsubsection{Lower Bound}\label{sub:lower-rayleigh-main-csit}
To determine the stationary point $\yt^\star$, we need the derivatives of the function $g$ from \eqref{eq:def-g-thm}
\begin{align}
g(\yt) &= \exp(\lyt\yt) - \exp(\lxt(\yt-s))\label{eq:g-rayleigh}\\
g'(\yt) &= \lyt \exp(\lyt \yt) - \lxt \exp(\lxt(\yt-s))\\
g''(\yt) &= \lyt^2 \exp(\lyt \yt) - \lxt^2 \exp(\lxt(\yt-s))\,.
\end{align}
Now, we solve \eqref{eq:condition-yt-star} to obtain
\begin{equation}\label{eq:yt-star-rayleigh}
\yt^\star = \frac{\lxt s + \log\frac{\lyt}{\lxt}}{\lxt-\lyt}\,.
\end{equation}
\begin{figure}%
	\centering
	\begin{tikzpicture}%
\begin{axis}[
	width=.98\linewidth,
	height=.25\textheight,
	xlabel={$\yt$},
	ylabel={$g(\yt)$},
	domain=-5:0,
	xmin=-5,
	xmax=0,
	legend style = {
		at = {(0.01, 0.52)},
		anchor = west,
	},
	legend cell align=left,
]
\addplot[thick, plot0] {-exp(-(1-x))+exp(2*x)};
\addlegendentry{$\lyt>\lxt$};

\addplot[thick, plot1] {-exp(-(1-x))+exp(.5*x)};
\addlegendentry{$\lxt\exp(-\lxt s) < \lyt < \lxt$};

\addplot[thick, plot3] {-exp(-(1-x))+exp(.1*x)};
\addlegendentry{$\lyt < \lxt\exp(-\lxt s)$};

\addplot[black, thin, dashed] {0};

\addplot[plot0, mark=*] coordinates {(-1.6931471805599454, -0.03383382080915317)};
\addplot[plot1, mark=*] coordinates {(0, 0.6321205588285577)};
\addplot[plot3, mark=*] coordinates {(-1.4473167699933838, 0.778728986667413)};
\end{axis}
\end{tikzpicture}
	
	\vspace*{-0.5em}
	\caption{Illustration of the function $g$ for the lower bound on the secrecy outage probability in the case of Rayleigh fading with different combinations of $\lxt$ and $\lyt$. A value $g(\yt^\star)>F_{\Xt}(s)$ is achieved at $\yt^{\star}$, if $g$ has a maximum and $\yt^\star<0$. Otherwise, the maximum is at $\yt=0$.}
	\label{fig:lower-bound-monotonic}
\end{figure}
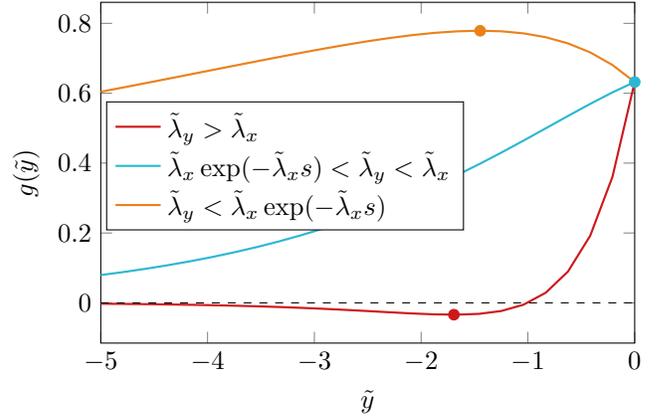
By evaluating $g''(\yt^\star)$, we get that $g$ has a maximum, if $\lxt>\lyt$. In addition, $\yt^\star<0$ needs to hold in order to have a lower bound different from $F_{\Xt}(s)$. The different possibilities are illustrated in Fig.~\ref{fig:lower-bound-monotonic}. The case that $g$ has a maximum at $\yt^\star<0$, if $\lyt<\lxt\exp(-\lxt s)$.
Combining these results yields the lower bound on the secrecy outage probability for Rayleigh fading as
\begin{equation}\label{eq:lower-rayleigh-csit-main}
\underline{\varepsilon_{\text{CSIT}}} = \begin{cases}
g\left(\frac{\lxt s + \log\frac{\lyt}{\lxt}}{\lxt-\lyt}\right) & \text{for } \lyt < \lxt\exp(-\lxt s)\\
1 - \exp(-\lxt s) & \text{else}
\end{cases}.
\end{equation}

The optimal joint distribution which achieves this lower bound is given in form of a copula in \cite[Thm.~3.2]{Frank1987}. A numerically determined joint \gls{pdf} $f_{\X, \Y}$ corresponding to this, is presented in Fig.~\ref{fig:joint-pdf-rayleigh-lower-main-csit}. The parameters are set to $\snrbob=\snreve=\SI{0}{\decibel}$, $\lx=\ly=1$, and $\ratesecret=1$. This plot nicely shows the following intuition behind the optimal coupling of the fading gains. First note that outages $\eventsum$ occur either if the main channel is bad, i.e., $\X$ is small, or if the quality of the eavesdropper channel is very good, i.e., $\Y$ is high. Therefore, the optimal joint distribution is designed in such a way that both occur simultaneously. In other words, if the main channel is bad, it is no problem if the eavesdropper has a very good channel, since an outage is likely to occur anyway. However, if the gain $\X$ to Bob is high, the gain $\Y$ should be low to avoid outages. This is exactly what can be seen in Fig.~\ref{fig:joint-pdf-rayleigh-lower-main-csit}. An interactive version of this plot can also be found at \cite{BesserGitlab} where an interested reader can change the parameters.
\begin{figure}
	\centering
	\begin{tikzpicture}
\begin{axis}[
	width=.98\linewidth,
	height=.26\textheight,
	xlabel={$\X$},
	ylabel={$\Y$},
	view={0}{90},
	colorbar,
	colorbar style={
		at={(.9,.95)},
		anchor=north east,
		height=0.65*\pgfkeysvalueof{/pgfplots/parent axis height},
	},
	colormap name=colormap/cool,
	point meta max=3,
	point meta min=0,
	xmin=0,
	xmax=2,
	ymin=0,
	ymax=2,
	domain=-2:2,
]
    \addplot3[contour filled={number=50},
    		  mesh/rows=50,
    		  mesh/cols=50,
    		  patch type=bilinear,
    		 ] table {data/joint_pdf_main_csit-bob0-eve0-rs1.dat};
\end{axis}
\end{tikzpicture}
	
	\vspace*{-0.5em}
	\caption{Joint \gls{pdf} $f_{\X, \Y}$ that achieves the lower bound from \eqref{eq:lower-rayleigh-csit-main} for Rayleigh fading with perfect \gls{csit} about the main channel. The parameters are $\snrbob=\snreve=\SI{0}{\decibel}$, $\lx=\ly=1$, and $\ratesecret=1$.}
	\label{fig:joint-pdf-rayleigh-lower-main-csit}
\end{figure}

\begin{rem}
	As mentioned in Corollary~\ref{cor:lower-bound-no-eve}, from a system designer's point of view, it might be more interesting to have a condition on the \gls{snr} of the eavesdropper's channel for which the best lower bound can be achieved. Using the fact that $\lyt=\ly/(2^{\ratesecret}\snreve)$, we get
	\begin{equation}\label{eq:lower-rayleigh-condition-snr-eve-main-csit}
	\snreve < \frac{\ly}{\lx} \frac{\snrbob}{2^{\ratesecret}} \exp\left(\frac{\lx}{\snrbob}\left(2^{\ratesecret}-1\right)\right)\,.
	\end{equation}
	If Eve's \gls{snr} $\snreve$ is less than this value, the lower bound reduces to $F_{\Xt}(s)$, which is the lowest possible. This means that there exists a dependency structure for which the outage probability is the same as if there was no eavesdropper.
	{With the assumption $\lx=\ly=1$, \eqref{eq:lower-rayleigh-condition-snr-eve-main-csit} can be approximated as $\snreve<\snrbob$ for small rates $\ratesecret$. For arbitrary $\ratesecret$, we can find the following approximation of Eve's \gls{snr} in \si{\decibel}-scale in the high \gls{snr} regime, 
	\begin{equation*}
	\snreve^{[\si{\decibel}]} < \snrbob^{[\si{\decibel}]} - 3\ratesecret\,,
	\end{equation*}
	where we use the additional approximation $10\log_{10}(2)\approx 3$. We can therefore state that the maximum eavesdropper \gls{snr} $\snreve$, for which the secrecy outage probability can be independent of the eavesdropper, is approximately growing linearly with the \gls{snr} of the main channel $\snrbob$ in \si{\decibel}-scale. Interactive plots illustrating these approximations can be found online at~\cite{BesserGitlab}.}
	
	Unfortunately, {\eqref{eq:lower-rayleigh-condition-snr-eve-main-csit}} cannot be solved for $\ratesecret$ in a closed-form expression. However, we can use the fact that $\exp(\lxt s)\geq1$ and thus $\underline{\varepsilon}=F_{\Xt}(s)$ holds for $\lyt>\lxt$. This gives the following sufficient condition on the secrecy rate $\ratesecret$
	\begin{equation}\label{eq:cond-rs-best-lower-bound}
	\ratesecret<\log_2\left(\frac{\ly}{\lx} \frac{\snrbob}{\snreve}\right)
	\end{equation}
	for which $\underline{\varepsilon}=F_{\Xt}(s)$. This means that we can ensure that the lower bound is the best possible by choosing $\ratesecret$ according to \eqref{eq:cond-rs-best-lower-bound}.
\end{rem}

Another interesting aspect is the asymptotic behavior of the outage probability. In \cite{Bloch2008b}, it is shown that for independent channels, the outage probability tends to a positive value for $\ratesecret\to 0$. However, it is possible that the lower bound converges to zero. Namely, if $\ly/\snreve > \lx/\snrbob$, the second term of \eqref{eq:lower-rayleigh-csit-main} holds which goes to zero for $\ratesecret\to 0$ (recall that $s=2^{\ratesecret}-1$). Otherwise, the lower bound also approaches a positive limit which is given in \eqref{eq:lower-rayleigh-main-csit-limit-rs0} {at the bottom of the page}.
\begin{figure*}[b]
	\noindent\rule{\textwidth}{.5pt}
	\begin{equation}\label{eq:lower-rayleigh-main-csit-limit-rs0}
	\lim\limits_{\ratesecret\to 0} \underline{\varepsilon_\text{CSIT}} = \begin{cases}
	\exp\left(\frac{\ly\log\frac{\ly\snrbob}{\lx\snreve}}{\snreve\left(\frac{\lx}{\snrbob}-\frac{\ly}{\snreve}\right)}\right) - \exp\left(\frac{\lx\log\frac{\ly\snrbob}{\lx\snreve}}{\snrbob\left(\frac{\lx}{\snrbob}-\frac{\ly}{\snreve}\right)}\right) & \text{\upshape if }\; \dfrac{\ly}{\snreve} < \dfrac{\lx}{\snrbob}
	\\
	0 & \text{\upshape else}
	\end{cases}
	\end{equation}
\end{figure*}
For $\ratesecret\to\infty$, the outage probabilities both in the independent and best case go to one, i.e., confidential transmission is impossible.

Another asymptotic behavior of interest is the diversity gain $d$ which is defined as~\cite[Def.~1]{LizhongZheng2003}
\begin{equation}\label{eq:def-diversity-gain}
d = \lim\limits_{\snrbob\to\infty} -\frac{\log\varepsilon}{\log\snrbob}\,.
\end{equation}
For fixed $\lyt$ and $\snrbob\to\infty$, the lower bound $\underline{\varepsilon}$ from \eqref{eq:lower-rayleigh-csit-main} is given as $F_{\Xt}(s) = 1-\exp(-\lx s/\snrbob)$ and the diversity gain is therefore
\begin{equation*}
\underline{d}_{\text{CSIT}} = \lim\limits_{\snrbob\to\infty} -\frac{\log\underline{\varepsilon_{\text{CSIT}}}}{\log\snrbob} = \lim\limits_{\snrbob\to\infty} -\frac{\log F_{\Xt}(s)}{\log\snrbob} = 1\,.
\end{equation*}

\subsubsection{Upper Bound}
Since the stationary points are the same for both the lower and upper bound, we can build on the results derived for the lower bound to obtain the upper bound. From \eqref{eq:upper-bound-general-csit}, we know that the upper bound is one, if $h$ has a maximum at $\yt^\star$. From the previous results, we know that this is the case for $\lxt>\lyt$. For $\lxt<\lyt$, it can easily be verified that $\yt^\star<0$.
All of this can be combined to the upper bound on the secrecy outage probability for Rayleigh fading as
\begin{equation}\label{eq:upper-rayleigh-csit-main}
\overline{\varepsilon_\text{CSIT}} = \begin{cases}
1 & \text{for } \lxt \geq \lyt\\
h\left(\frac{\lxt s + \log\frac{\lyt}{\lxt}}{\lxt-\lyt}\right) & \text{for } \lxt < \lyt
\end{cases}
\end{equation}
with
\begin{equation}\label{eq:h-rayleigh}
h(\yt) = 1 - \exp(\lxt(\yt-s)) + \exp(\lyt\yt)\,.
\end{equation}

Note that $\ly/(2^{\ratesecret}\snreve)=\lyt<\lxt=\lx/\snrbob$ is a characterization that Eve's channel is better than Bob's channel. An example when this can occur is if $\lx=\ly$ and Eve has a higher \gls{snr} than Bob, i.e., $\snreve>\snrbob$. In this case, the upper bound on the secrecy outage probability is one.

Obviously, for $\ratesecret\to\infty$ the upper bound on the outage probability also goes to one. For $\ratesecret\to 0$, the same structure as for the lower bound holds, cf.~\eqref{eq:lower-rayleigh-main-csit-limit-rs0}. The only difference is that the function $h$ needs to be used for the evaluation.

Since $\overline{\varepsilon}=h(\yt^\star)$ for $\snrbob\to\infty$, the diversity gain of the upper bound is given as
\begin{equation*}
\overline{d}_{\text{CSIT}} = \lim\limits_{\snrbob\to\infty} -\frac{\log\overline{\varepsilon_{\text{CSIT}}}}{\log\snrbob} = \lim\limits_{\snrbob\to\infty} -\frac{\log h(\yt^\star)}{\log\snrbob} = 1\,.
\end{equation*}

\subsubsection{Independent Channels}
The outage probability for independent channels is evaluated according to \eqref{eq:outage-independent-main-csit}. In the specific case of Rayleigh fading, this is
\begin{equation}\label{eq:outage-rayleigh-independent-main-csit}
\varepsilon_{\text{ind, CSIT}} = 1 - \frac{\lyt\exp\left(-\lxt s\right)}{\lyt+\lxt}\,.
\end{equation}
As expected, this is identical to the result in \cite[Prop.~2]{Bloch2008b}.
With a reminder that $s=2^{\ratesecret}-1$ and $\lyt=\ly/(2^{\ratesecret}\snreve)$, it is easy to see from \eqref{eq:outage-rayleigh-independent-main-csit} that $\varepsilon_{\text{ind}}\to 1$ for $\ratesecret\to\infty$ and $\varepsilon_{\text{ind}}=\frac{\lxt}{\lxt+\ly/\snreve}$ for $\ratesecret\to 0$.

Another interesting observation, which is mentioned in \cite{Bloch2008b}, is that $\varepsilon_{\text{ind}}\approx 1-\exp(-\lxt s)$ for $\lx/\snrbob\ll\ly/\snreve$. In \eqref{eq:lower-rayleigh-csit-main}, we showed that this corresponds to the lower bound on $\varepsilon$, i.e., the independent case approaches the best case.
From this, it is clear that the diversity gain for independent channels is also one.

\subsubsection{Numerical Example}
In the following, we provide some numerical examples for the results derived in the previous section. We will fix the fading coefficients of the Rayleigh fading to be the same as $\lx=\ly=1$. However, all plots can be found as interactive versions online at \cite{BesserGitlab} where these parameters can also be varied. We encourage the reader to try further parameter combinations on their own.

As a first example, Fig.~\ref{fig:rayleigh-csit-main} shows the upper and lower bounds on the secrecy outage probability along with the independent case over different values of Bob's \gls{snr}. All curves are plotted for two different \glspl{snr} of the eavesdropper, \SIlist{0;10}{\decibel}. The secrecy rate is set to $\ratesecret=0.1$. {For verification of the theoretical results, we also show results obtained using Monte Carlo simulations with $10^5$ samples for each point. The source code can be found at \cite{BesserGitlab}.}
As expected, all outage probabilities decrease in general when Eve's \gls{snr} decreases. The only exception happens when the lower bound is above a certain \gls{snr} of the main channel. After the point at which the lower bound \enquote{switches}, outages are only caused by Bob's channel. Therefore, Eve's \gls{snr} has no influence on it. From the condition in \eqref{eq:lower-rayleigh-csit-main}, we know that the needed \gls{snr} $\snrbob$, at which the switch occurs, increases with an increasing \gls{snr} of the eavesdropper $\snreve$. In the plot, this means that the \enquote{switching-point} occurs at around \SI{0}{\decibel} for $\snreve=\SI{0}{\decibel}$ and around \SI{10.5}{\decibel} for $\snreve=\SI{10}{\decibel}$. The upper bound shows a similar behavior. Up to a certain \gls{snr} of the main channel, it is constantly \num{1}. Above this \gls{snr}, it decreases, cf.~\eqref{eq:upper-rayleigh-csit-main}.
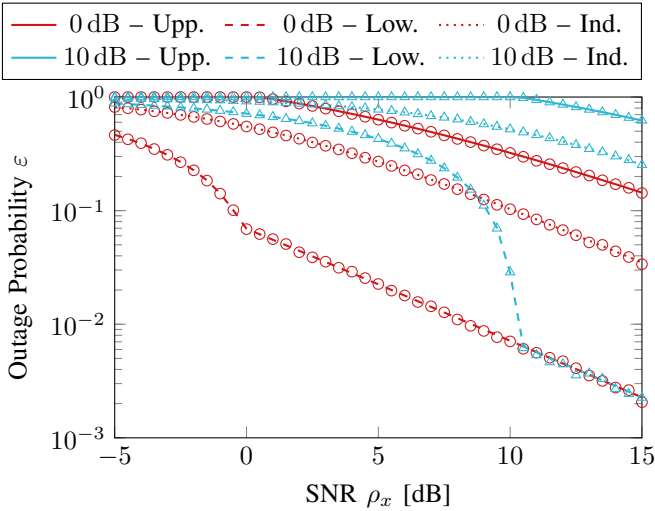
\begin{figure}
	\centering
	\begin{tikzpicture}%
\begin{axis}[
	width=.97\linewidth, %
	height=.25\textheight,
	xmin=-5,
	xmax=15,
	ymax=1,
	ymin=.001,
	ymode=log,
	xlabel={SNR $\snrbob$ [dB]},
	ylabel={Outage Probability $\varepsilon$},
	legend cell align=center,
	legend columns=3,
	legend pos=south west,
	legend style = {
		at = {(1, 1.05)}, %
		anchor = south east, %
		/tikz/every even column/.append style={column sep=0.1cm}
	},
]
\addplot[plot0, mark=none, thick] table[x=snr,y=upper] {data/secrecy_outage_main_csit-eve_0.0-rs_0.1-lx_1-ly_1-MC.dat};
\addlegendentry{$\SI{0}{\decibel}$ -- Upp.};

\addplot[plot0, dashed, mark=none, thick] table[x=snr,y=lower] {data/secrecy_outage_main_csit-eve_0.0-rs_0.1-lx_1-ly_1-MC.dat};
\addlegendentry{$\SI{0}{\decibel}$ -- Low.};

\addplot[plot0, dotted, mark=none, thick] table[x=snr,y=indep] {data/secrecy_outage_main_csit-eve_0.0-rs_0.1-lx_1-ly_1-MC.dat};
\addlegendentry{$\SI{0}{\decibel}$ -- Ind.};

\addplot[plot1, mark=none, thick] table[x=snr,y=upper] {data/secrecy_outage_main_csit-eve_10.0-rs_0.1-lx_1-ly_1-MC.dat};
\addlegendentry{$\SI{10}{\decibel}$ -- Upp.};

\addplot[plot1, dashed, mark=none, thick] table[x=snr,y=lower] {data/secrecy_outage_main_csit-eve_10.0-rs_0.1-lx_1-ly_1-MC.dat};
\addlegendentry{$\SI{10}{\decibel}$ -- Low.};

\addplot[plot1, dotted, mark=none, thick] table[x=snr,y=indep] {data/secrecy_outage_main_csit-eve_10.0-rs_0.1-lx_1-ly_1-MC.dat};
\addlegendentry{$\SI{10}{\decibel}$ -- Ind.};

\addplot[plot0, only marks, mark=o] table[x=snr,y=upperMC] {data/secrecy_outage_main_csit-eve_0.0-rs_0.1-lx_1-ly_1-MC.dat};
\addplot[plot0, only marks, mark=o] table[x=snr,y=lowerMC] {data/secrecy_outage_main_csit-eve_0.0-rs_0.1-lx_1-ly_1-MC.dat};
\addplot[plot0, only marks, mark=o] table[x=snr,y=indepMC] {data/secrecy_outage_main_csit-eve_0.0-rs_0.1-lx_1-ly_1-MC.dat};

\addplot[plot1, only marks, mark=triangle] table[x=snr,y=upperMC] {data/secrecy_outage_main_csit-eve_10.0-rs_0.1-lx_1-ly_1-MC.dat};
\addplot[plot1, only marks, mark=triangle] table[x=snr,y=lowerMC] {data/secrecy_outage_main_csit-eve_10.0-rs_0.1-lx_1-ly_1-MC.dat};
\addplot[plot1, only marks, mark=triangle] table[x=snr,y=indepMC] {data/secrecy_outage_main_csit-eve_10.0-rs_0.1-lx_1-ly_1-MC.dat};

\end{axis}
\end{tikzpicture}
	
	\vspace*{-1.2em}
	\caption{Secrecy outage probabilities over different \gls{snr} values $\snrbob$ of the main channel for Rayleigh fading channels and different values of the eavesdropper's \gls{snr} $\snreve$. The transmitter has perfect \gls{csi} about the main channel. The parameters are $\lambda_{x}=\lambda_{y}=1$ and $\ratesecret=0.1$. {The markers indicate results obtained by Monte Carlo simulations.}}
	\label{fig:rayleigh-csit-main}
\end{figure}

The next example illustrates the different behaviors with respect to the {$\varepsilon$-outage secrecy rate, i.e., the maximum rate for which the secrecy outage probability is at most $\varepsilon$}.
{In Fig.~\ref{fig:rayleigh-csit-main-eps-outage-rates}, the upper and lower bounds and the independent case are shown over the tolerated outage probability $\varepsilon$. The values are obtained by numerically solving the expressions of the different outage probabilities, e.g., \eqref{eq:upper-rayleigh-csit-main}, for $\ratesecret$. The source code can be found at \cite{BesserGitlab}.}
The \gls{snr} of the main channel is fixed to $\snrbob=\SI{5}{\decibel}$. For the first three curves the \gls{snr} of the eavesdropper is set to $\snreve=\SI{0}{\decibel}$, i.e., it is smaller than Bob's \gls{snr}.
{First, it can be seen that the $\varepsilon$-outage secrecy rates converge to infinity for $\varepsilon\to 1$. This is expected since $\varepsilon=1$ corresponds to the case that one tolerates an arbitrary number of outages. Furthermore, for both the worst case and independent case, there exists a positive value of $\varepsilon$ below which the highest achievable secrecy outage rate is zero, i.e., no secure transmission is possible. These minimum values of $\varepsilon$ are indicated by the dashed lines in Fig.~\ref{fig:rayleigh-csit-main-eps-outage-rates}. On the other hand, it is possible to achieve positive $\varepsilon$-outage secrecy rates for arbitrarily small $\varepsilon$ in the best case. However, this is only possible for the case that Bob's channel is better than Eve's, i.e., $\lxt\geq\lyt$, cf.~\eqref{eq:lower-rayleigh-main-csit-limit-rs0}. Otherwise, the minimum $\varepsilon$, for which the highest achievable secrecy rate $\ratesecret$ is positive, is also positive. This effect can be seen in the example, where we set $\snreve=\SI{5.1}{\decibel}$. As expected, the $\varepsilon$-outage secrecy rates in the independent case decrease when increasing Eve's \gls{snr}. For this example, no secure transmission is possible in the worst case, if $\snreve>\snrbob$, i.e., $\ratesecret=0$.}
{
\begin{figure}
	\centering
	\begin{tikzpicture}%
\begin{axis}[
	width=.95\linewidth,
	height=.25\textheight,
	ymin=0.0001,
	ymax=10,
	xmax=1,
	xmin=.0001,
	ymode=log,
	xmode=log,
	ylabel={$\varepsilon$-Outage Secrecy Rate $\ratesecret$},
	xlabel={Outage Probability $\varepsilon$},
	legend cell align=left,
	legend columns=1,
	legend pos=north west,
	legend style = {
		/tikz/every even column/.append style={column sep=0.1cm}
	},
]
\addplot[plot0, mark=*, mark repeat=15] table[x=eps,y=upper] {data/eps_outage_sec_rates-main_csit-lx1-ly1-snrx5-snry0.dat};
\addlegendentry{Worst case};

\addplot[plot1, mark=triangle*, mark repeat=15] table[x=eps,y=lower] {data/eps_outage_sec_rates-main_csit-lx1-ly1-snrx5-snry0.dat};
\addlegendentry{Best case};

\addplot[plot2, mark=square*, mark repeat=15] table[x=eps,y=indep] {data/eps_outage_sec_rates-main_csit-lx1-ly1-snrx5-snry0.dat};
\addlegendentry{Ind. case};

\addplot[plot3, mark=diamond*, mark repeat=15] table[x=eps,y=lower] {data/eps_outage_sec_rates-main_csit-lx1-ly1-snrx5-snry5.1.dat};
\addlegendentry{Best case with $\snreve=\SI{5.1}{\decibel}$};

\addplot[plot4, mark=pentagon*, mark repeat=15] table[x=eps,y=indep] {data/eps_outage_sec_rates-main_csit-lx1-ly1-snrx5-snry5.1.dat};
\addlegendentry{Ind. case with $\snreve=\SI{5.1}{\decibel}$};
\addplot[plot0, dashed, mark=none, thick] coordinates {(0.5985108496700132, 0.0001) (0.5985108496700132, .2)};
\addplot[plot2, dashed, mark=none, thick] coordinates {(0.24025307335204213, 0.0001) (0.24025307335204213, .2)};
\addplot[plot3, dashed, mark=none, thick] coordinates {(0.0084705500480447, 0.0001) (0.0084705500480447, .01)};
\addplot[plot4, dashed, mark=none, thick] coordinates {(0.5057562084111449, 0.0001) (0.5057562084111449, .2)};

\end{axis}
\end{tikzpicture}
	
	\vspace*{-0.5em}
	\caption{$\varepsilon$-outage secrecy rates over the tolerated outage probability $\varepsilon$ for Rayleigh fading channels. The transmitter has perfect \gls{csi} about the main channel. The parameters are $\snrbob=\SI{5}{\decibel}$, and $\lambda_{x}=\lambda_{y}=1$. The first three curves are for Eve's \gls{snr} $\snreve=\SI{0}{\decibel}$. The last two curves are the best and independent case for $\snreve=\SI{5.1}{\decibel}$, respectively. The dashed lines indicate the minimum values of $\varepsilon$ for which positive secrecy rates are achievable.}
	\label{fig:rayleigh-csit-main-eps-outage-rates}
\end{figure}
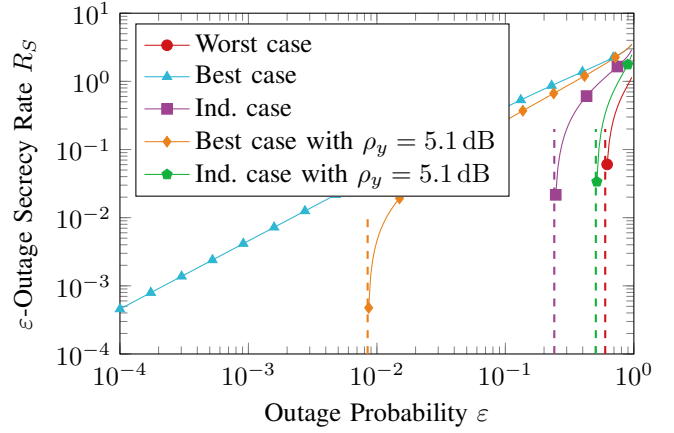
}

In the final example, we want to further discuss the behavior of the secrecy outage probabilities for changing \gls{snr} of the eavesdropper $\snreve$.
In Fig.~\ref{fig:rayleigh-csit-main-vs-snreve}, the bounds and the independent case are shown over different values of $\snreve$. The secrecy rate is set to $\ratesecret=0.1$. The \gls{snr} of the main channel is fixed to $\snrbob=\SI{15}{\decibel}$.
The first noticeable behavior is that all probabilities approach the lower bound for $\snreve\to-\infty$. Especially for the independent case, this is what we expected based on the discussion in the previous subsection. Another interesting behavior is the already mentioned \enquote{switching-point} of the lower bound. Up to a certain \gls{snr}, the lower bound is constant and independent of $\snreve$. Once Eve's \gls{snr} is above a certain value, the lower bound increases with increasing $\snreve$. The specific value of $\snreve$ can be determined according to \eqref{eq:lower-rayleigh-condition-snr-eve-main-csit} and is shown as a dashed line in the plot.
Please note that in Fig.~\ref{fig:rayleigh-csit-main-vs-snreve}, this switching-point is at approximately Bob's \gls{snr}, $\snreve\approx\SI{14.7}{\decibel}$. This is not true in general but only for the specific parameters we chose. For the general condition, please see \eqref{eq:lower-rayleigh-condition-snr-eve-main-csit}. Additionally, all plots are available interactively at \cite{BesserGitlab}. We encourage the readers to change the different parameters on their own and observe their influence on the mentioned points.
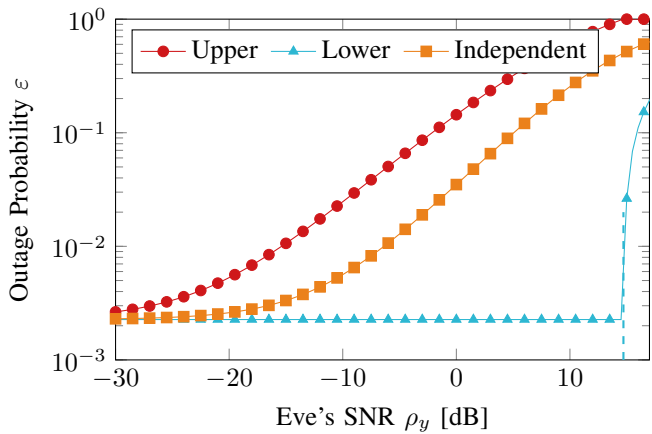
\begin{figure}
	\centering
	\begin{tikzpicture}%
\begin{axis}[
	width=.98\linewidth,
	height=.25\textheight,
	xmin=-30,
	xmax=17,
	ymax=1,
	ymin=.001,
	ymode=log,
	xlabel={Eve's SNR $\snreve$ [dB]},
	ylabel={Outage Probability $\varepsilon$},
	legend pos=north west,
	legend cell align=center,
	legend columns = 3,
	legend style = {
		/tikz/every even column/.append style={column sep=0.1cm}
	},
]
\addplot[plot0, mark=*, mark repeat=3] table[x={snr_eve},y=upper] {data/secrecy_outage_main_csit-bob_15.0-rs_0.1-lx_1-ly_1.dat};
\addlegendentry{Upper};

\addplot[plot1, mark=triangle*, mark repeat=3] table[x={snr_eve},y=lower] {data/secrecy_outage_main_csit-bob_15.0-rs_0.1-lx_1-ly_1.dat};
\addlegendentry{Lower};

\addplot[plot3, mark=square*, mark repeat=3] table[x={snr_eve},y=indep] {data/secrecy_outage_main_csit-bob_15.0-rs_0.1-lx_1-ly_1.dat};
\addlegendentry{Independent};

\addplot[plot1, dashed, thick] coordinates {(14.708827082706847, 0.001) (14.708827082706847, .02)};
\end{axis}
\end{tikzpicture}
	
	\vspace*{-0.5em}
	\caption{Secrecy outage probabilities over the eavesdropper's \gls{snr} $\snreve$ for Rayleigh fading channels. The transmitter has perfect \gls{csi} about the main channel. The parameters are $\snrbob=\SI{15}{\decibel}$, $\lambda_{x}=\lambda_{y}=1$, and $\ratesecret=0.1$. The dashed line indicates Eve's \gls{snr} according to \eqref{eq:lower-rayleigh-condition-snr-eve-main-csit} at which the lower bound changes.}
	\label{fig:rayleigh-csit-main-vs-snreve}
\end{figure}

\subsection{Statistical Channel State Information at the Transmitter}\label{sub:rayleigh-no-csit}
We now drop the assumption of having perfect \gls{csit} and apply the results from Section~\ref{sec:outage-no-csit} for the case of Rayleigh fading.

\subsubsection{Lower Bound}
We start with the general expression of $\underline{\varepsilon}$ from \eqref{eq:lower-bound-no-csit} in Theorem~\ref{thm:lower-bound-general-no-csit} and apply the results from Section~\ref{sub:lower-rayleigh-main-csit}.
We know the expression for $\yt^\star$ from \eqref{eq:yt-star-rayleigh} and that $g_1$ only as a maximum, if $\lxt>\lyt$. The only difference to the case of perfect \gls{csit} is the range of $\yt$. Now, the maximum needs to be in $\yt<s-t$, otherwise the lower bound is only dependent on $F_{\Xt}$. For $\lxt>\lyt$, $\yt^\star<s-t$ holds, if additionally $\lyt < \lxt\exp(t(\lyt-\lxt) - \lyt s)$ holds. Otherwise, we immediately get $\underline{\varepsilon}=F_{\Xt}(t)$.

For Rayleigh fading with only statistical \gls{csit}, the secrecy outage probability is therefore lower bounded by
\begin{equation}\label{eq:lower-rayleigh-no-csit}
\begin{split}
\underline{\varepsilon_\text{no}} = \max\left[
g_1\left(\min\left[\frac{\lxt s + \log\frac{\lyt}{\lxt}}{\lxt-\lyt}, s-t\right]\right),\right.&\\1 - \exp(-\lxt t&)\Bigg]\,,
\end{split}
\end{equation}
where $g_1$ is given by \eqref{eq:g-rayleigh}. Also, we want to recall the shorthands $s=2^{\ratesecret}-1$ and $t=2^{\rateconfusion+\ratesecret}-1$.

Similarly to the case where perfect \gls{csit} about the main channel is available, the lower bound is given by $F_{\Xt}(t)$ for large values of $\snrbob$. We therefore get the same diversity
\begin{equation*}
\underline{d}_{\text{no}} = \lim\limits_{\snrbob\to\infty} -\frac{\log\underline{\varepsilon_{\text{no}}}}{\log\snrbob} = \lim\limits_{\snrbob\to\infty} -\frac{\log F_{\Xt}(t)}{\log\snrbob} = 1\,.
\end{equation*}

For very small $\ratesecret$, we get \eqref{eq:limit-rayleigh-lower-no-csit-rs0} {at the bottom of the page}.
\begin{figure*}[!b]
	\noindent\rule{\textwidth}{.5pt}
	\begin{equation}
	\lim\limits_{\ratesecret\to 0} \underline{\varepsilon_\text{no}} = \max\left[g_1\left(\frac{\log\frac{\ly\snrbob}{\lx\snreve}}{\frac{\lx}{\snrbob}-\frac{\ly}{\snreve}}\right), 1-\exp\left(-\lxt(2^{\rateconfusion}-1)\right)\right]\label{eq:limit-rayleigh-lower-no-csit-rs0}
	\end{equation}
\end{figure*}

\subsubsection{Upper Bound}
The upper bound is derived from \eqref{eq:upper-bound-no-csit} in Theorem~\ref{thm:upper-bound-general-no-csit}. We combine this with the previous results and discussions about the optimal $\yt^\star$ for Rayleigh fading.
Recall that a minimum of
\begin{equation*}
h_1(\yt) = 1 - \exp\left(-\lxt(s-\yt)\right) + \exp(\lyt \yt)
\end{equation*}
is attained for $\lxt<\lyt$. Second, we need to minimize $h_1$ over $\yt<s-t$. With reference to Remark~\ref{rem:upper-bound-no-csit-single-ytstar}, we know that the minimum is attained at $\min\left[\yt^\star, s-t\right]$, which is equal to $s-t$, if the additional condition $\lyt < \lxt\exp(t(\lyt-\lxt) - \lyt s)$ holds.
If $\lxt\geq\lyt$, the upper bound is constantly one.
Combining all of this gives the following expression for the upper bound on the secrecy outage probability when only statistical \gls{csit} is available
\begin{equation}
\overline{\varepsilon_\text{no}} = \begin{cases}
h_1\left(\min\left[\frac{\lxt s + \log\frac{\lyt}{\lxt}}{\lxt-\lyt}, s-t\right]\right) & \text{for } \lyt > \lxt\\
1 & \text{for } \lyt\leq\lxt
\end{cases}\,.
\end{equation}
For increasing \gls{snr} values $\snrbob$ of Bob, $\yt^\star$ is decreasing and becomes smaller than $s-t$. Therefore, the diversity of the upper bound is given as
\begin{equation*}
\overline{d}_{\text{no}} = \lim\limits_{\snrbob\to\infty} -\frac{\log\overline{\varepsilon_{\text{no}}}}{\log\snrbob} = \lim\limits_{\snrbob\to\infty} -\frac{\log h_1(\yt^\star)}{\log\snrbob} = 1\,.
\end{equation*}

\subsubsection{Independent Case}
For comparison, we derive the outage probability of independent $\X$ and $\Y$. In this case, the outage probability $\varepsilon_{\text{ind}}$ can be calculated using \eqref{eq:outage-independent-no-csit}. In the case of Rayleigh fading, this evaluates to
\begin{equation}
\varepsilon_{\text{ind, no}} = 1 - \exp\left(-\lxt t\right) + \frac{\lxt\exp\left(\lyt(s-t) - \lxt t\right)}{\lxt+\lyt}\,.
\end{equation}
It is easy to verify that the diversity is also \num{1}.

The limit for small $\ratesecret$ is given as \eqref{eq:limit-rayleigh-indep-no-csit-rs0} {at the top of the next page.}
\begin{figure*}[!t]
	\begin{equation}\label{eq:limit-rayleigh-indep-no-csit-rs0}
	\lim\limits_{\ratesecret\to 0} {\varepsilon_\text{ind,no}} = 1 - \exp\left(-\lxt(2^{\rateconfusion}-1)\right) + \frac{\lxt\exp\left(-(\lxt+\lyt)(2^{\rateconfusion}-1)\right)}{\lxt+\lyt}
	\end{equation}
	\noindent\rule{\textwidth}{.5pt}
\end{figure*}

\subsubsection{Numerical Example}
The bounds and the independent case are shown in Fig.~\ref{fig:rayleigh-no-csit}. {Together with the theoretical results, we show results obtained by Monte Carlo simulations for verification.} The parameters of the transmission and the Rayleigh fading are $\ly=\lx=1$, $\ratesecret=0.1$, and $\rateconfusion=1$. The x-axis shows Bob's \gls{snr} $\snrbob$. All curves are plotted for two different \glspl{snr} of Eve $\snreve$, namely \SIlist{0;10}{\decibel}.
It can be seen that the shapes of the curves look similar to the ones where perfect \gls{csit} is available, cf. Fig~\ref{fig:rayleigh-csit-main}. As expected, all outage probabilities increase with an increase in Eve's \gls{snr} $\snreve$. Only the lower bound is independent of $\snreve$ above a certain threshold $\snrbob$.
\begin{figure}
	\centering
	\begin{tikzpicture}%
\begin{axis}[
	width=.97\linewidth,
	height=.26\textheight,
	xmin=-5,
	xmax=15,
	ymax=1,
	ymin=.01,
	ymode=log,
	xlabel={SNR $\snrbob$ [dB]},
	ylabel={Outage Probability $\varepsilon$},
	legend pos=south west,
	legend cell align=center,
	legend columns = 3,
	legend style = {
		at = {(1, 1.05)}, %
		anchor = south east, %
		/tikz/every even column/.append style={column sep=0.1cm}
	},
]
\addplot[plot0, mark=none, thick] table[x=snr,y=upper] {data/secrecy_outage_no_csit-eve_0.0-rs_0.1-lx_1-ly_1-MC.dat};
\addlegendentry{$\SI{0}{\decibel}$ -- Upp.};

\addplot[plot0, dashed, mark=none, thick] table[x=snr,y=lower] {data/secrecy_outage_no_csit-eve_0.0-rs_0.1-lx_1-ly_1-MC.dat};
\addlegendentry{$\SI{0}{\decibel}$ -- Low.};

\addplot[plot0, dotted, mark=none, thick] table[x=snr,y=indep] {data/secrecy_outage_no_csit-eve_0.0-rs_0.1-lx_1-ly_1-MC.dat};
\addlegendentry{$\SI{0}{\decibel}$ -- Ind.};

\addplot[plot1, mark=none, thick] table[x=snr,y=upper] {data/secrecy_outage_no_csit-eve_10.0-rs_0.1-lx_1-ly_1-MC.dat};
\addlegendentry{$\SI{10}{\decibel}$ -- Upp.};

\addplot[plot1, dashed, mark=none, thick] table[x=snr,y=lower] {data/secrecy_outage_no_csit-eve_10.0-rs_0.1-lx_1-ly_1-MC.dat};
\addlegendentry{$\SI{10}{\decibel}$ -- Low.};

\addplot[plot1, dotted, mark=none, thick] table[x=snr,y=indep] {data/secrecy_outage_no_csit-eve_10.0-rs_0.1-lx_1-ly_1-MC.dat};
\addlegendentry{$\SI{10}{\decibel}$ -- Ind.};

\addplot[plot0, only marks, mark=o] table[x=snr,y=upperMC] {data/secrecy_outage_no_csit-eve_0.0-rs_0.1-lx_1-ly_1-MC.dat};
\addplot[plot0, only marks, mark=o] table[x=snr,y=lowerMC] {data/secrecy_outage_no_csit-eve_0.0-rs_0.1-lx_1-ly_1-MC.dat};
\addplot[plot0, only marks, mark=o] table[x=snr,y=indepMC] {data/secrecy_outage_no_csit-eve_0.0-rs_0.1-lx_1-ly_1-MC.dat};

\addplot[plot1, only marks, mark=triangle] table[x=snr,y=upperMC] {data/secrecy_outage_no_csit-eve_10.0-rs_0.1-lx_1-ly_1-MC.dat};
\addplot[plot1, only marks, mark=triangle] table[x=snr,y=lowerMC] {data/secrecy_outage_no_csit-eve_10.0-rs_0.1-lx_1-ly_1-MC.dat};
\addplot[plot1, only marks, mark=triangle] table[x=snr,y=indepMC] {data/secrecy_outage_no_csit-eve_10.0-rs_0.1-lx_1-ly_1-MC.dat};

\end{axis}
\end{tikzpicture}
	\vspace*{-1.5em}
	\caption{Secrecy outage probabilities over different \gls{snr} values $\snrbob$ of the main channel for Rayleigh fading channels and different values of the eavesdropper's \gls{snr} $\snreve$. The transmitter has only statistical \gls{csi}. The parameters are $\lambda_{x}=\lambda_{y}=1$, $\ratesecret=0.1$, and $\rateconfusion=1$. {The markers indicate results obtained by Monte Carlo simulations.}}
	\label{fig:rayleigh-no-csit}
\end{figure}

The last example compares the two scenarios of perfect \gls{csit} and statistical \gls{csit} of the main channel. In Fig.~\ref{fig:rayleigh-comparison-no-csit-main-csit}, the different curves for both scenarios are shown. The \gls{snr} of the eavesdropper is fixed to $\snreve=\SI{5}{\decibel}$. The rates are set to $\ratesecret=0.5$ and $\rateconfusion=1$. Note that only the case of having statistical \gls{csit} depends on $\rateconfusion$. As expected, the outage probabilities are generally lower when perfect \gls{csit} is available. Only the upper bound is identical. However, this is only due to the chosen values of the different parameters. It is possible to obtain greater values for the upper bound with statistical \gls{csit} than with perfect \gls{csit} by increasing $\rateconfusion$. The plot is also available as an interactive version at \cite{BesserGitlab}, where a curious reader can observe this behavior.
\begin{figure}
	\centering
	\begin{tikzpicture}%
\begin{axis}[
	width=.97\linewidth,
	height=.26\textheight,
	xmin=-5,
	xmax=15,
	ymax=1,
	ymin=.01,
	ymode=log,
	xlabel={SNR $\snrbob$ [dB]},
	ylabel={Outage Probability $\varepsilon$},
	legend pos=south west,
	legend cell align=center,
	legend columns = 3,
	legend style = {
		at = {(1, 1.05)}, %
		anchor = south east, %
		/tikz/every even column/.append style={column sep=0.1cm}
	},
]

\addplot[plot0, mark=none, thick] table[x=snr,y=upper] {data/secrecy_outage_main_csit-eve_5.0-rs_0.5-lx_1-ly_1-MC.dat};
\addlegendentry{Perf. -- Upper};

\addplot[plot0, dashed, mark=none, thick] table[x=snr,y=lower] {data/secrecy_outage_main_csit-eve_5.0-rs_0.5-lx_1-ly_1-MC.dat};
\addlegendentry{Perf. -- Lower};

\addplot[plot0, dotted, mark=none, thick] table[x=snr,y=indep] {data/secrecy_outage_main_csit-eve_5.0-rs_0.5-lx_1-ly_1-MC.dat};
\addlegendentry{Perf. -- Ind.};

\addplot[plot1, mark=none, thick] table[x=snr,y=upper] {data/secrecy_outage_no_csit-eve_5.0-rs_0.5-lx_1-ly_1-MC.dat};
\addlegendentry{Stat. -- Upper};

\addplot[plot1, dashed, mark=none, thick] table[x=snr,y=lower] {data/secrecy_outage_no_csit-eve_5.0-rs_0.5-lx_1-ly_1-MC.dat};
\addlegendentry{Stat. -- Lower};

\addplot[plot1, dotted, mark=none, thick] table[x=snr,y=indep] {data/secrecy_outage_no_csit-eve_5.0-rs_0.5-lx_1-ly_1-MC.dat};
\addlegendentry{Stat. -- Ind.};

\addplot[plot0, only marks, mark=o] table[x=snr,y=upperMC] {data/secrecy_outage_main_csit-eve_5.0-rs_0.5-lx_1-ly_1-MC.dat};
\addplot[plot0, only marks, mark=o] table[x=snr,y=lowerMC] {data/secrecy_outage_main_csit-eve_5.0-rs_0.5-lx_1-ly_1-MC.dat};
\addplot[plot0, only marks, mark=o] table[x=snr,y=indepMC] {data/secrecy_outage_main_csit-eve_5.0-rs_0.5-lx_1-ly_1-MC.dat};

\addplot[plot1, only marks, mark=triangle] table[x=snr,y=upperMC] {data/secrecy_outage_no_csit-eve_5.0-rs_0.5-lx_1-ly_1-MC.dat};
\addplot[plot1, only marks, mark=triangle] table[x=snr,y=lowerMC] {data/secrecy_outage_no_csit-eve_5.0-rs_0.5-lx_1-ly_1-MC.dat};
\addplot[plot1, only marks, mark=triangle] table[x=snr,y=indepMC] {data/secrecy_outage_no_csit-eve_5.0-rs_0.5-lx_1-ly_1-MC.dat};

\end{axis}
\end{tikzpicture}
	\vspace*{-1.5em}
	\caption{Comparison of the secrecy outage probabilities over different \gls{snr} values $\snrbob$ of the main channel for Rayleigh fading channels with perfect and with only statistical \gls{csit}. The parameters are $\snreve=\SI{5}{\decibel}$, $\lambda_{x}=\lambda_{y}=1$, $\ratesecret=0.5$, and $\rateconfusion=1$. {The markers indicate results obtained by Monte Carlo simulations.}}
	\label{fig:rayleigh-comparison-no-csit-main-csit}
\end{figure}
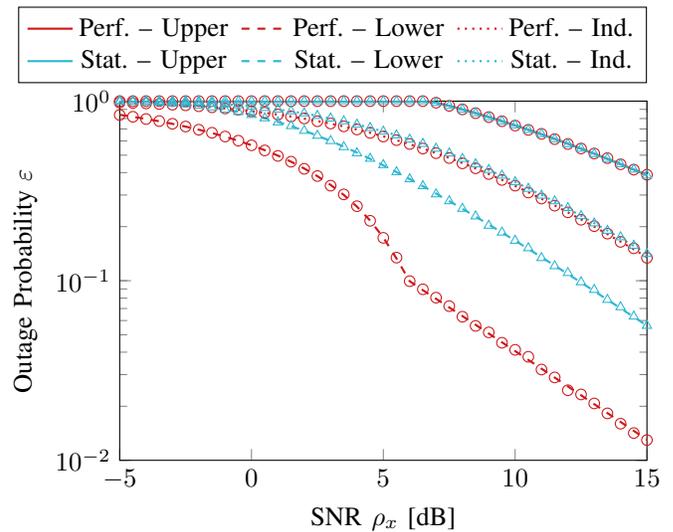

\subsection{Alternative Secrecy Outage Definition}
For comparison, we now take a look at the secrecy outage probability according to the alternative definition from \cite{Zhou2011} discussed in Section~\ref{sec:full-outage}.
The bounds are evaluated for Rayleigh fading according to Theorems~\ref{thm:bounds-general-full-outage-main-csit} and \ref{thm:bounds-general-full-outage-no-csit}.

First, we give an example for the case that only statistical \gls{csit} is available. From Theorem~\ref{thm:bounds-general-full-outage-no-csit}, we know that the outage probability is given as the dual of the copula in this case. Figure~\ref{fig:rayleigh-full-outage-no-csit} shows the upper and lower bound together with the independent case for Rayleigh fading with $\ratesecret=0.1$, $\rateconfusion=1$, and $\snreve=\SI{0}{\decibel}$. The probabilities of the individual events $\eventbob$ and $\eventeve$ are also shown for comparison.
The first observations are about the probabilities of the events $\eventbob$ and $\eventeve$. As expected, the outage probability at Bob decreases with his \gls{snr} $\snrbob$ increasing. The outage event $\eventeve$, due to Eve being able to decode, remains constant since we set her \gls{snr} $\snreve$ to a constant value. Next, recall that the minimum probability of the union of events is the maximum of the probabilities of the individual events.
From Fig.~\ref{fig:rayleigh-full-outage-no-csit}, it can be seen that the lower bound achieves this trivial bound. Thus, it first reduces with increasing $\snrbob$ until the probability from $\eventeve$ becomes larger. It then remains constant when further increasing $\snrbob$. Both the upper bound and the independent case approach the lower bound for high values of Bob's \gls{snr} $\snrbob$.
\begin{figure}
	\centering
	\begin{tikzpicture}%
\begin{axis}[
	width=.97\linewidth,
	height=.26\textheight,
	xmin=-5,
	xmax=15,
	ymax=1,
	ymin=.1,
	ymode=log,
	xlabel={SNR $\snrbob$ [dB]},
	ylabel={Outage Probability $\varepsilon$},
	legend pos=south west,
	legend cell align=center,
	legend columns = 3,
	legend style = {
		at = {(1, 1.05)},
		anchor = south east,
		/tikz/every even column/.append style={column sep=0.1cm}
	},
]

\addplot[plot0, dashed, smooth, very thick] table[x=snr,y=event2] {data/full_secrecy_outage_no_csit-eve_0.0-rs_0.1-rc_1.0-lx_1-ly_1.dat};
\addlegendentry{Event $\eventbob$};

\addplot[plot3, dashdotted, smooth, very thick] table[x=snr,y=event3] {data/full_secrecy_outage_no_csit-eve_0.0-rs_0.1-rc_1.0-lx_1-ly_1.dat};
\addlegendentry{Event $\eventeve$};

\addplot[plot1, mark=*, mark repeat=3, smooth] table[x=snr,y=upper] {data/full_secrecy_outage_no_csit-eve_0.0-rs_0.1-rc_1.0-lx_1-ly_1.dat};
\addlegendentry{Upper Bound};

\addplot[plot1, mark=triangle*, mark repeat=3, smooth] table[x=snr,y=lower] {data/full_secrecy_outage_no_csit-eve_0.0-rs_0.1-rc_1.0-lx_1-ly_1.dat};
\addlegendentry{Lower Bound};

\addplot[plot1, mark=square*, mark repeat=3, smooth] table[x=snr,y=indep] {data/full_secrecy_outage_no_csit-eve_0.0-rs_0.1-rc_1.0-lx_1-ly_1.dat};
\addlegendentry{Independent};

\end{axis}
\end{tikzpicture}
	\vspace*{-1.5em}
	\caption{Secrecy outage probabilities from \eqref{eq:def-event-full-outage-no-csit} over different \gls{snr} values $\snrbob$ of the main channel for Rayleigh fading channels with only statistical \gls{csit}. The parameters are $\snreve=\SI{0}{\decibel}$, $\lambda_{x}=\lambda_{y}=1$, $\ratesecret=0.1$, and $\rateconfusion=1$.}
	\label{fig:rayleigh-full-outage-no-csit}
\end{figure}

Next, we compare the scenarios that Alice has perfect \gls{csi} about the main channel and that only statistical \gls{csit} is available. The bounds and the independent case are shown for Rayleigh fading in Fig.~\ref{fig:rayleigh-full-outage-comparison-no-csit-main-csit}. The first interesting observation is that both outage probabilities are the same in the worst-case, i.e., the upper bound. For both the best-case and the independent case, there is an improvement in the outage probability when perfect \gls{csit} is available. However, all curves approach the same fundamental limit for increasing $\snrbob$, due to Eve being able to decode. Thus, the advantage of having perfect \gls{csit} vanishes in the high-\gls{snr} regime.
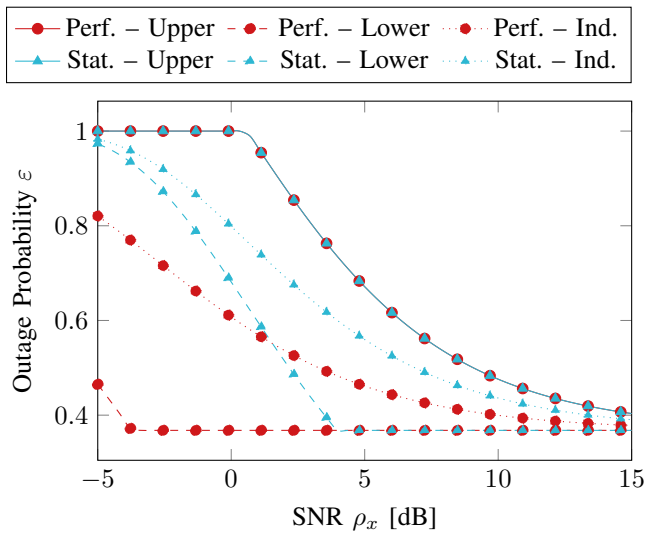
\begin{figure}
	\centering
	\begin{tikzpicture}%
\begin{axis}[
	width=.98\linewidth,
	height=.26\textheight,
	xmin=-5,
	xmax=15,
	xlabel={SNR $\snrbob$ [dB]},
	ylabel={Outage Probability $\varepsilon$},
	legend pos=north east,
	legend cell align=center,
	legend columns = 3,
	legend style = {
		at = {(1, 1.05)},
		anchor = south east,
		/tikz/every even column/.append style={column sep=0.1cm}
	},
]

\addplot[plot0, mark=*, mark repeat=3, smooth] table[x=snr,y=upper] {data/full_secrecy_outage_main_csit-eve_0.0-rs_0.1-rc_1.0-lx_1-ly_1.dat};
\addlegendentry{Perf. -- Upper};

\addplot[plot0, dashed, mark=*, mark repeat=3, smooth] table[x=snr,y=lower] {data/full_secrecy_outage_main_csit-eve_0.0-rs_0.1-rc_1.0-lx_1-ly_1.dat};
\addlegendentry{Perf. -- Lower};

\addplot[plot0, dotted, mark=*, mark repeat=3] table[x=snr,y=indep] {data/full_secrecy_outage_main_csit-eve_0.0-rs_0.1-rc_1.0-lx_1-ly_1.dat};
\addlegendentry{Perf. -- Ind.};

\addplot[plot1, mark=triangle*, mark repeat=3, smooth] table[x=snr,y=upper] {data/full_secrecy_outage_no_csit-eve_0.0-rs_0.1-rc_1.0-lx_1-ly_1.dat};
\addlegendentry{Stat. -- Upper};

\addplot[plot1, dashed, mark=triangle*, mark repeat=3, smooth] table[x=snr,y=lower] {data/full_secrecy_outage_no_csit-eve_0.0-rs_0.1-rc_1.0-lx_1-ly_1.dat};
\addlegendentry{Stat. -- Lower};

\addplot[plot1, dotted, mark=triangle*, mark repeat=3, smooth] table[x=snr,y=indep] {data/full_secrecy_outage_no_csit-eve_0.0-rs_0.1-rc_1.0-lx_1-ly_1.dat};
\addlegendentry{Stat. -- Ind.};

\end{axis}
\end{tikzpicture}
	\vspace*{-0.5em}
	\caption{Comparison of the secrecy outage probabilities from \eqref{eq:def-event-full-outage-no-csit} over different \gls{snr} values $\snrbob$ of the main channel for Rayleigh fading channels with perfect and with only statistical \gls{csit}. The parameters are $\snreve=\SI{0}{\decibel}$, $\lambda_{x}=\lambda_{y}=1$, $\ratesecret=0.1$, and $\rateconfusion=1$.}
	\label{fig:rayleigh-full-outage-comparison-no-csit-main-csit}
\end{figure}
\section{Conclusion}\label{sec:conclusion}

In this work, we presented bounds on the secrecy outage probability for wiretap channels with dependent fading coefficients. We always assume statistical \gls{csit} of the eavesdropper channel. The two cases of perfect \gls{csit} about the main channel and only statistical \gls{csit} are considered.
The bounds are compared to the usually considered scenario of independent channels and significant performance differences are shown.
Especially, it is shown that in the best-case, the secrecy outage probability can be independent of Eve's channel. A sufficient condition for this case was derived.

The first motivational question from the introduction can be answered based on this work as follows: \emph{A system designer would first estimate the marginal distributions of the channels, e.g., by taking measurements at different locations. Depending on the available \gls{csit}, he would then apply Theorem~\ref{thm:lower-bound-general-csit} or \ref{thm:upper-bound-general-no-csit} to derive the outage probability in the worst-case. {Solving for $\ratesecret$ then gives the maximum transmission rate at which the outage probability is below the tolerated one, cf. Fig.~\ref{fig:rayleigh-csit-main-eps-outage-rates}.}} %

If the system designer is able to control the joint distribution, he can apply Theorem~\ref{thm:upper-bound-general-csit} or \ref{thm:upper-bound-general-no-csit} to calculate the lower bound on the secrecy outage probability, i.e., the best-case, and use it as a benchmark for his real implementation.

Additionally, it is shown that the copula approach also works for a pessimistic alternative secrecy outage definition. Lower and upper bounds were derived for this case too.

The presented work only deals with a single passive eavesdropper. In future work, this could be extended to consider multiple eavesdroppers. {In both cases of colluding and non-colluding eavesdroppers, a meta-distribution based on the $n$ individual eavesdroppers can be derived~\cite{Tang2018}. However, an extension of the copula bounds to more than two channels is not straightforward. This is mostly due to the fact that the Fr\'{e}chet-Hoeffding lower bound $W$ is not a copula in the case of $n>2$.}
{Another possible extension could be the generalization to imperfect \gls{csit}.}

\appendices
\section{Proof of Theorem~\ref{thm:lower-bound-general-csit}\label{app:proof-thm-lower-bound-general-csit}}
First, let us rewrite \eqref{eq:lower-outage-sum} as
\begin{equation}
\underline{\varepsilon} = \sup_{\yt\leq 0}\positive{g(\yt)}
\end{equation}
with the optimization function
\begin{equation}\label{eq:def-g}
g(\yt) = F_{\Xt}(s-\yt) + F_{\Yt}(\yt) - 1\,.
\end{equation}
As a baseline, we take a look at the boundaries,
\begin{align}
\lim\limits_{\yt\to-\infty} g(\yt) &= 0\\
g(0) &= F_{\Xt}(s)\,,
\end{align}
which already shows that the lower bound on the secrecy outage probability can not be lower than $F_{\Xt}(s)$.

We now want to maximize $g$. First, the necessary condition is
\begin{equation}
\frac{\partial g}{\partial \yt} = g^\prime(\yt) = f_{\Yt}(\yt) - f_{\Xt}(s-\yt) \stackrel{!}{=} 0\,.
\end{equation}
The optimal $\yt$ is therefore given by
\begin{equation}\label{eq:condition-yt-star}
f_{\Yt}(\yt^{\star}) = f_{\Xt}(s-\yt^{\star})\,.
\end{equation}
We have a maximum at this point if
\begin{equation}\label{eq:deriv-g-2}
\left.\frac{\partial^2 g}{\partial \yt^2}\right|_{\yt=\yt^\star} = g''(\yt^\star) = f_{\Yt}^{\prime}(\yt^\star) + f_{\Xt}^{\prime}(s-\yt^\star) \stackrel{!}{<} 0\,.
\end{equation}
The lower bound on the outage probability is then given as the maximum over all stationary points $\yt^\star$
\begin{equation*}
\underline{\varepsilon} = \max_{\yt^{\star} \in \mathcal{Y}} g(\yt^{\star})\, ,
\end{equation*}
with constraint set $\mathcal{Y}$ defined in (\ref{eq:cons}). 
We can now combine this to \eqref{eq:lower-bound-general-csit} where we take into account that there might be multiple maximum points.

\section{Proof of Theorem~\ref{thm:upper-bound-general-csit}\label{app:proof-thm-upper-bound-general-csit}}
We start with \eqref{eq:upper-outage-sum} to obtain
\begin{equation}
\overline{\varepsilon} = \inf_{\yt\leq 0} \leqone{h(\yt)}
\end{equation}
with the function
\begin{equation}
h(\yt) = F_{\Xt}(s-\yt) + F_{\Yt}(\yt)\,.
\end{equation}
Again, we take a look at the boundaries first, which are given as
\begin{align}
\lim\limits_{\yt\to-\infty} h(\yt) &= 1\\
h(0) &= 1 + F_{\Xt}(s) \geq 1\,.
\end{align}
This already leads to the conclusion that the upper bound on the outage probability is only less than one, if $h(\yt)$ has a minimum for $\yt<0$.

It is easy to see that the derivations of $h$ are the same as the ones of $g$ from the lower bound. Therefore, the same optimal points $\yt^\star$ specified in \eqref{eq:condition-yt-star} are relevant for our consideration. Combining \eqref{eq:deriv-g-2} with the above observations gives \eqref{eq:upper-bound-general-csit} where the minimum again is over all stationary points $\yt^\star$.

\section{Proof of Theorem~\ref{thm:lower-bound-general-no-csit}\label{app:proof-thm-lower-bound-general-no-csit}}
The general lower bound from \eqref{eq:lower-bound-general-no-csit} can be reformulated as
\begin{equation}
\underline{\varepsilon} = \sup_{\yt\leq 0} g(\yt)
\end{equation}
with the optimization function
\begin{equation}\label{eq:def-g-no-csit}
g(\yt) = \begin{cases}
g_1(\yt) = F_{\Xt}(s-\yt) + F_{\Yt}(\yt) - 1 & \text{for } \yt < s-t\\
g_2(\yt) = F_{\Xt}(t) + F_{\Yt}(\yt) - 1 & \text{for } \yt\geq s-t\\
\end{cases}
\end{equation}
with the shorthand $s=2^{\ratesecret}-1$ and $t=2^{\rateconfusion+\ratesecret}-1$.
The overall solution is then obtained by individually maximizing $g_1$ and $g_2$ and finally taking the maximum between them
\begin{equation}
\underline{\varepsilon} = \max\left[\max_{\yt<s-t} g_1(\yt), \max_{s-t\leq\yt\leq 0} g_2(\yt)\right]\,.
\end{equation}

First, it is easy to see that $g_2$ is increasing in $\yt$ and therefore, the maximum is attained at the maximum $\yt$, i.e. $\yt=0$,
\begin{equation}\label{eq:max-g2-no-csit}
\max_{s-t\leq\yt\leq 0} g_2(\yt) = g_2(0) = F_{\Xt}(t)\,.
\end{equation}

Next, we want to maximize $g_1$ and can observe that it looks identical to $g$ in the case of perfect \gls{csit} from \eqref{eq:def-g}. The only difference is the range of $\yt$, which is $\yt<s-t$ in the case of statistical \gls{csit}.
The boundaries are given as
\begin{align*}
\lim\limits_{\yt\to-\infty} g_1(\yt) &= 0\\
g_1(s-t) &= g_2(s-t)\,,
\end{align*}
where we can see that, 1) the function $g$ is continuous and 2) that the maximum of $g$ can only be in $\yt<s-t$ if $g_1$ has a maximum in this range. Otherwise, the maximum is always given by \eqref{eq:max-g2-no-csit}.
Since the derivatives of $g_1$ are the same as for $g$ from \eqref{eq:def-g}, the conditions on the stationary points are the same, namely \eqref{eq:condition-yt-star} and \eqref{eq:deriv-g-2}.
Combining this gives the lower bound on the secrecy outage probability when only statistical \gls{csit} is available gives \eqref{eq:lower-bound-no-csit}.

\section{Proof of Theorem~\ref{thm:upper-bound-general-no-csit}\label{app:proof-thm-upper-bound-general-no-csit}}
First, we have
\begin{equation}
\overline{\varepsilon} = \inf_{\yt\leq 0} h(\yt) = \min\left[\min_{\yt<s-t} h_1(\yt), \min_{s-t\leq\yt\leq 0} h_2(\yt), 1\right]\,,
\end{equation}
where $h_1$ and $h_2$ are defined in a similar fashion as $g_1$ and $g_2$ from \eqref{eq:def-g-no-csit} but based on the optimization function in \eqref{eq:upper-bound-general-no-csit}.
Equivalent to $g_2$, $h_2$ is increasing in $\yt$ and therefore the minimum is attained at the minimal $\yt$,
\begin{equation}
\min_{s-t\leq\yt\leq 0} h_2(\yt) = h_2\left(s-t\right) = F_{\Xt}\left(t\right) + F_{\Yt}\left(s-t\right).
\end{equation}
Again, $h_1$ has to have a minimum in $\yt<s-t$ in order to attain lower values than $h_2$.
Analogue to \eqref{eq:lower-bound-no-csit}, we combine all of this for the upper bound on $\varepsilon$ to \eqref{eq:upper-bound-no-csit}.

\section{Proof of Theorem~\ref{thm:bounds-general-full-outage-no-csit}\label{app:proof-thm-bounds-general-full-outage-no-csit}}
The probability of $E_{\text{alt,no}}$ from \eqref{eq:def-event-full-outage-no-csit} can be written as
\begin{align}
\varepsilon_{\text{alt,no}} &= \Pr\left(E_{\text{alt,no}}\right)\\
&= \Pr\left(\Xt<t \text{ or } \Yt<s-t\right)\\
&= F_{\Xt}(t) + F_{\Yt}(s-t) - F_{\Xt,\Yt}(t, s-t)\\
&= \bar{C}(F_{\Xt}(t), F_{\Yt}(s-t))\,,
\end{align}
where $\bar{C}$ is the dual of the copula $C$ which determines the joint distribution $F_{\Xt, \Yt}$.
The upper and lower bound on $\varepsilon_{\text{alt,no}}$ follow then immediately from the Fr\'{e}chet-Hoeffding bounds~\cite[Eq.(2.2.5)]{Nelsen2006}
\begin{equation*}
W(a, b) \leq C(a, b) \leq M(a, b)\,.
\end{equation*}

\printbibliography
\end{document}